\newtheorem{theorem}{Theorem}[section]
\newtheorem{corollary}[theorem]{Corollary}
\newtheorem{lemma}[theorem]{Lemma}
\newtheorem{definition}[theorem]{Definition}
\newtheorem{proposition}[theorem]{Proposition}
\DeclareMathOperator*{\argmax}{argmax}
\newcommand{\then}{\Longrightarrow}
\author[I. Lamprou et al.]{Ioannis Lamprou
  \and Ioannis Sigalas
  \and Ioannis Vaxevanakis\thanks{Corresponding author, ORCID: \href{https://orcid.org/0000-0001-7345-8520}{0000-0001-7345-8520}.} \thanks{
The research work 
was supported by the Hellenic Foundation for Research and Innovation (HFRI) under the 3rd Call for HFRI PhD Fellowships (Fellowship Number: 5245).} \\
  \and Vassilis Zissimopoulos\thanks{The research work was partially supported by the Chinese Academy of
Sciences President's International Fellowship Initiative (PIFI) under
grant No. 2025PVA0150, and conducted during a visit to the Shenzhen
Institutes of Advanced Technology (SIAT).}}
\title[Fault-Tolerant Total and PPI Domination]{Approximations for Fault-Tolerant Total and Partial Positive Influence Domination\thanks{
Some results appearing in this paper were originally presented in ``Fault-Tolerant Total Domination via Submodular Function Approximation'' at the 17th Annual Conference on Theory and Applications of Models of Computation (TAMC), LNCS, vol.~13571, pp.~281--292, 2022.}}
\affiliation{
  Department of Informatics and Telecommunications, \\ National and Kapodistrian University of Athens, Athens, Greece}
\keywords{total domination, fault-tolerance, partial positive influence, majority illusion, approximation algorithm, non-submodular function}
\begin{document}
 
\publicationdata
{vol. 28:2}
{2026}
{1}
{10.46298/dmtcs.15903}
{2025-06-20; 2025-06-20; 2025-11-07}
{2025-12-11}
\maketitle
\begin{abstract}
In \textit{total domination}, given a graph $G=(V,E)$, we seek a minimum-size set of nodes $S\subseteq V$, such that every node in $V$ has at least one neighbor in $S$.
We define a \textit{fault-tolerant} version of total domination, where we require any node in $V \setminus S$ to have at least $m$ neighbors in $S$. 
Let $\Delta$ denote the maximum degree in $G$.
We prove a first $1 + \ln(\Delta + m - 1)$ approximation for fault-tolerant total domination.
We also consider fault-tolerant variants of the weighted \textit{partial positive influence dominating set} problem, where we seek a minimum-size set of nodes $S\subseteq V$, such that every node in $V$ is either a member of $S$ or the sum of weights of its incident edges leading to nodes in $S$ is at least half of the sum of weights over all its incident edges.
We prove the first logarithmic approximations for the simple, total, and connected variants of this problem.
To prove the result for the connected case, we extend the general approximation framework for non-submodular functions from integer-valued to fractional-valued functions, which we believe is of independent interest.
\end{abstract}

\section{Introduction}

\textit{Domination} is a classic graph-theoretic notion which has historically attracted much attention \citep{haynes2013fundamentals,haynes2020topics} in terms of combinatorial bounds and algorithmic complexity.
In an undirected graph, a subset of its nodes is called a \emph{dominating set} if every node of the graph is either a member of the subset or adjacent to at least one node in the subset. 
A \emph{connected dominating set} is a subset of nodes such that its induced subgraph is connected and it is a dominating set.

A variant between (simple) domination and connected domination is \textit{total domination}.
A dominating set is called total if its induced subgraph does not include any isolated nodes.  Nodes both outside and inside the set must be adjacent to at least one node in the set.
In other words, every node must have at least one neighbor in the set.
Total domination was introduced by 
\cite{cockayne1980total} as a natural variant of domination, and has been extensively studied since.
A survey of results is provided in \citep{henning2013total} and modern applications may be found in wireless (sensor) networking \citep{jena2020Total}. 

An extension to these problems is \textit{fault-tolerant domination}.
In this setting, there is a parameter $m\in\mathbb{N}$ and the goal is to find a dominating set, such that every node outside the set has at least $m$ neighbors inside.
Motivation stems from wireless sensor networks \citep{karl2007protocols}, where the failure of a few sensors should not suffice
to harm the domination property and as a result cause a system malfunction.

Another domination problem of fault-tolerant nature is the \textit{Partial Positive Influence Dominating Set} (PPIDS) problem. A subset of nodes is called a PPIDS if every node of the graph is either a member of the subset or has at least half its neighbors in the subset (positively dominated). The keyword ``partial'' implies not all nodes have to be positively dominated. The mechanism of positively dominating can be viewed as a special \textit{Linear Threshold Diffusion} model \citep{Kempe2015}. The input consists of a weighted graph $G=(V, E, w)$, where $w$ assigns weights to the edges, and an initial set of nodes $S \subseteq V$, namely the set of \textit{active} nodes. All other nodes are called \textit{inactive}. An inactive node $v\in V\setminus S$ becomes active if the sum of the weights of its active neighbors is greater than a threshold $\theta_v$, that is, it holds:
\begin{equation}\label{eq:majority}
\sum_{(v,s) \in E\,:\, s \in S}w_{(v,s)} \geq \theta_v.
\end{equation}
Weights usually represent the probability a node is influenced by a neighboring node. The goal is to find a minimum-size set of active nodes which eventually make the entire graph active. 
Choosing each node's threshold to be equal to half the weight of its incident edges is associated with the \textit{Majority Illusion} paradox \citep{Lerman2016} arising in social networks, see Figure~\ref{fig:illusion}. Essentially, an individual may believe that the behavior or opinion of a majority of their friends represents the behavior or opinion of the whole community. Thus, even a small yet appropriate starting group of like-minded individuals, corresponding to the initially active nodes, can end up influencing the opinion of the entire community.

\begin{figure}[ht]
    \begin{center}
       \includegraphics[width=100mm]{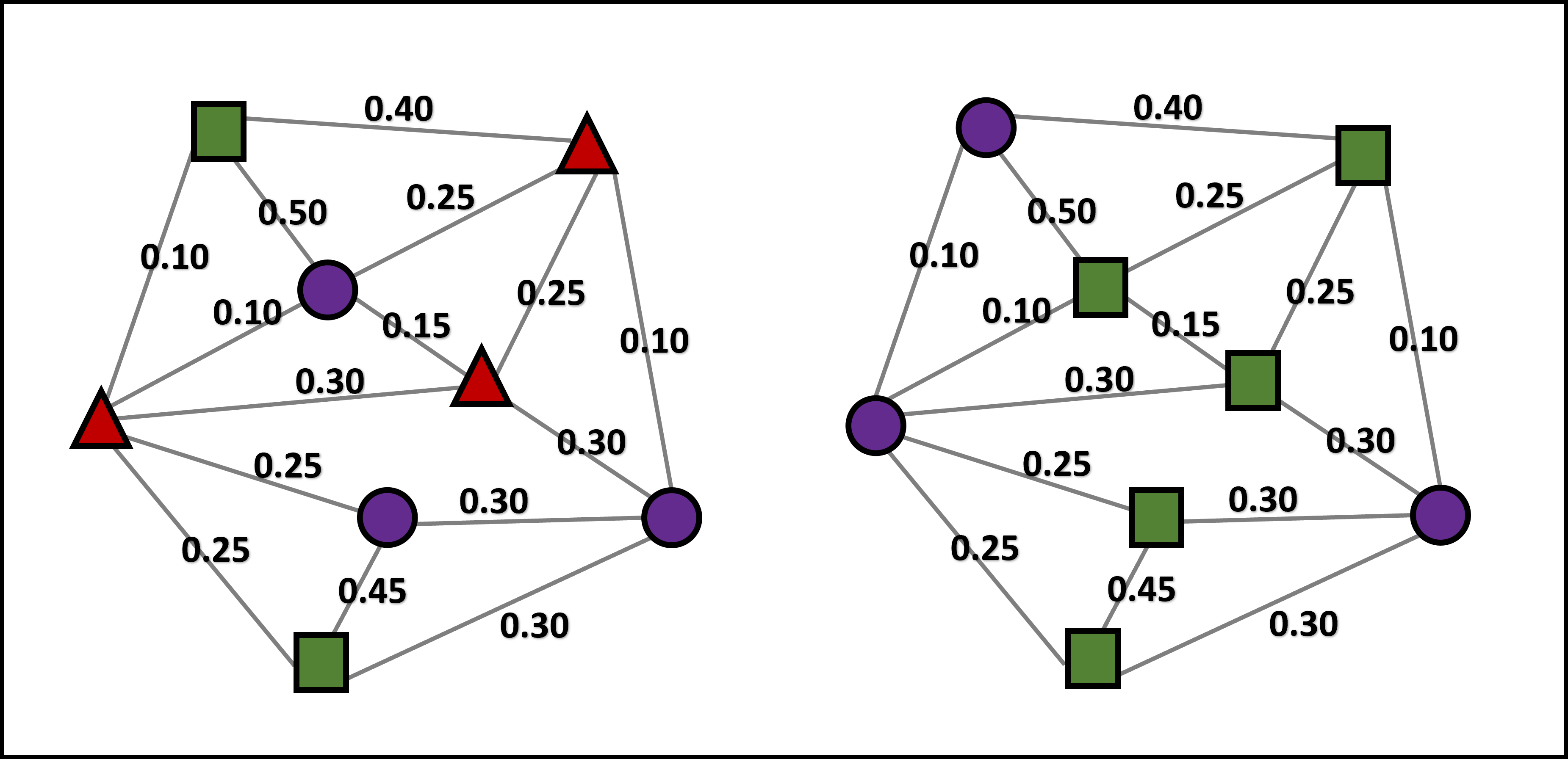}
       \caption{
        An illustration of the ``majority illusion'' paradox for an edge-weighted network. The two networks are identical, except for the three initially active (circular) nodes. Initially, all other nodes are considered inactive. In the network on the right, all initially inactive nodes satisfy inequality~\ref{eq:majority} for $\theta$ equal to 0.5, and therefore become active (square nodes). In the network on the left, however, there are inactive nodes not satisfying this threshold (triangular nodes), therefore they remain inactive.
        }
        \label{fig:illusion}
    \end{center}
\end{figure}

In this work, we consider the fault-tolerant variant of total domination and we prove a first approximation result for it. Motivated by Linear Threshold Diffusion, we consider the PPIDS problem for weighted edge graphs with rational weights. We prove a first approximation result for this problem. Moreover, we prove approximation results for the total and connected cases. To prove the result for the connected case, we develop a general approximation framework for non-submodular functions by extending the approximation technique for submodular functions found in literature.

\paragraph{Related Work.}
The complexity of dominating set problems has been studied extensively in literature and approximation algorithms have been designed.
The Minimum Dominating Set (DS), Minimum Total Dominating Set (TDS) and Minimum Connected Dominating Set (CDS) problems are all NP-hard and there is no polynomial time algorithm with approximation ratio $(1-\varepsilon)\ln|V|$, for any $\varepsilon > 0$, unless $NP\subseteq DTIME\big(|V|^{O(\log\log |V|)}\big)$ \citep{Guha1998,CHLEBIK20081264}.

Recall a set $S$ is a \textit{fault-tolerant dominating set} if every node in $V\setminus S$ has at least $m$ neighbors in $S$. In \citep{foerster2013approximating}, a greedy algorithm with a submodular function is used to approximate fault-tolerant DS. Instead of a finite sums approach, like the one we use in this paper, their analysis employs an estimation formula to achieve a $1+\ln(\Delta + m)$ guarantee.

A relevant variant, with the same approximation hardness as DS, is $k$-tuple domination, where $S$ is a $k$-tuple dominating set if every node in $V\setminus S$ has at least $k$ neighbors in $S$ and every node in $S$ has at least $k-1$ neighbors in $S$.
The problem is introduced by \cite{KLASING200475}, where they apply a reduction to Minimum $k$-Cover, a budget variant of Set Cover, to obtain a $1+\ln(\Delta+1)$ guarantee.

A greedy approach for TDS in \citep{zhu2009approximation} yields a $1.5+\ln(\Delta-0.5)$ approximation by using a potential function formed as the sum of a submodular function and a non-submodular function. Later, in subsection~\ref{sec:total-domination}, we further discuss this result in comparison to the methodology we follow in this paper.
An improved result for TDS is given by \cite{CHLEBIK20081264}, where the problem is reduced to Set Cover and an $H(\Delta)-0.5$ approximation is obtained with $H$ being the harmonic function.
Note it holds $$\ln(n)+\frac{1}{2n}+\gamma>H(n)>ln(n)+\gamma,$$ where $\gamma=0.5772156649$ is the Euler constant. 

Similarly to TDS, \cite{Guha1998} prove the first $2+H(\Delta)$ approximation for CDS. 
A better approximation for CDS is given by \cite{Ruan2004} via a greedy algorithm with a non-submodular function proved to obtain a $2+\ln(\Delta)$ approximation. 
The best approximation for CDS is given by \cite{Du2008}, where they present a $(1+\varepsilon)(1+\ln(\Delta-1))$ guarantee. 
For fault-tolerant CDS, \cite{zhang2009two} give a $2H(\Delta+m-1)$ approximation algorithm. Later, \cite{Zhou2014} improve this result to $2+\ln(\Delta+m-2)$. To achieve that, they use the same potential function as in \citep{Ruan2004}, ensuring connectivity, and they add an additional function to count the extra neighbors.

A more general problem of the fault-tolerant CDS is the $k$-connected $m$-fold dominating set problem, denoted as $(k,m)$-CDS. Given a graph $G = (V, E)$, a subset of nodes $S \subseteq V$ is a $(k,m)$-CDS if every node in $V \setminus S$ has at least $m$ neighbors in $S$, and the subgraph induced by $S$ is $k$-connected. A $(2k-1)\alpha_0$-approximation algorithm for this problem is proposed by \cite{zhang2018computing}, where $\alpha_0$ denotes the approximation ratio for the minimum $(1, m)$-CDS problem (the $(1, m)$-CDS problem is exactly the fault-tolerant CDS problem).

A variant of fault-tolerant domination is Partial Positive
Influence Dominating Set (PPIDS). A set $S$ is called a PPIDS if every
node in $V\setminus S$ has at least half of its
neighbors in $S$. In the connected case, a set $S$ is called a PPICDS if $S$ is PPIDS and connected. \cite{zhu2010new} prove a $1+\ln(\lceil\frac{3\Delta}{2}\rceil)$ approximation for PPIDS and a $2+\ln(\lceil\frac{5\Delta}{2}\rceil)$ approximation for PPICDS by using a simple greedy algorithm. Later, \cite{zhong2023unified} define generalizations of PPIDS and PPICDS, which require each node outside the dominating set to have its number of neighbors inside the dominating set equal to a certain percentage $p$ of their degree (degree percentage constraints), and prove $1+\ln(\Delta+p\cdot\Delta)$ and $2+\ln(2\Delta+p\cdot\Delta)$ approximation, respectively. In the tuple version of the percentage constraint of PPIDS (referred to as "Total PIDS" in \citep{dinh2014approximability}, although their problem is not the same as the one we consider), where the goal is to find a set $S \subseteq V$ such that every node in the graph has a number of neighbors in $S$ equal to a certain percentage $p$ of its degree, the authors provide a $(1 + \ln \Delta)$-approximation guarantee.

For many of the above discussed results, the methods used boil down to the \textit{Set Function Optimization} problem. In general, if there is a finite universe $U$ and a function $f: 2^U \rightarrow \mathbb{R}$, the goal is to find a set $S\in U$ with specific properties that maximizes or minimizes the function. The first approximation results for finding the minimum size set such that a submodular function $f$ takes the maximum value are provided in \citep{wolsey1982analysis} by using a greedy algorithm. If $S=\{s_1,s_2, \cdots, s_{|S|}\}$ the solution of the greedy algorithm, where each $s_i$ denotes the $i$-th element selected by the algorithm, the authors provide an $H(\delta_{\max})$ approximation for integer functions and $1+\ln(\delta_{\max}/\delta_{\min})$ aproximation for real functions, where:

\[ \delta_{\max}=\Delta_{s_1}f(\emptyset)=\max\limits_{x\in U}\big(\Delta_{x}f(\emptyset)\big),\]\[ \delta_{\min}=\Delta_{s_{|S|}}f(S\setminus \{s_{|S|}\})\geq\min\limits_{\substack{x\in U\setminus A, \ A\subseteq U \\ \Delta_{x}f(A)>0}}\big(\Delta_{x}f(A)\big),\]

\noindent $\Delta_{x}f(A)=f(A\cup \{x\})-f(A)$, and $H$ is the harmonic function (here it holds $\delta_{\max}=\Delta$ and $\delta_{\min}\geq 1$). Improved results provided in \citep{wan2010greedy} and \citep{chen2022general}, establish $1+\ln(f(U)/opt)$ and $1+\ln\big(f(U)/(\delta_{\min}\cdot opt)\big)$ approximation respectively, where $opt$ stands for the size of an optimal set.

Further approximation results exist for different function properties such as $\gamma$-weakly submodular \citep{SHI2021126442} and $\varepsilon$-approximately submodular \citep{qian2019maximizing} functions.

In \citep{SHI2021126442}, the authors define the $\gamma$-weakly submodular function and provide several approximation results. A function is called $\gamma$-weakly submodular ($\gamma\geq 1$) if it satisfies the property $$f(B\cup \{x\})-f(B)\leq \gamma[f(A\cup \{x\})-f(A)],$$ for all $A\subseteq B\subseteq U$ and $x\in U\setminus B$. This property is also denoted as the diminishing returns ratio (\textit{DR ratio}). For the problem of finding the minimum size set that maximizes an Integer-Valued $\gamma$-weakly submodular function, \cite{SHI2021126442} provide the first $1/\gamma+\ln(\delta_{\min})$ approximation result, where $\delta_{\min}$ is the maximum function value over all singletons. Also, they generalized the result for Fraction-Valued functions.

A function is $\varepsilon$-approximately submodular ($\varepsilon\geq 0$) if and only if $$f(B\cup \{x\})-f(B)\leq f(A\cup \{x\})-f(A)+\varepsilon,$$ for all $A\subseteq B\subseteq U$ and $x\in U\setminus B$. This property is also denoted as the diminishing returns gap (\textit{DR gap}). \cite{qian2019maximizing} provide a weak approximation result ($f(S)\geq (1-1/e)(opt-k\varepsilon)$) for the problem of finding a set with at most $k$ elements that maximizes an $\varepsilon$-approximately submodular function. Also, \cite{SHI2021126442} provide the first $1+\varepsilon+\ln(\delta_{\min})$ approximation result for the problem of finding the minimum size set that maximizes an Integer-Valued $\varepsilon$-approximately submodular function.

\paragraph{Our Results.}
In Section~\ref{sec:ggs}, we develop a general framework using $\varepsilon$-approximately submodular definition and providing approximation results for a variant of Set Function Optimization. Given a universe of elements $U$ and an $\varepsilon$-approximately submodular set function $f:2^U\rightarrow \mathbb{R}$, the goal is to find a minimum size set $S^*\subseteq U$ such that $f(S^*)$ is maximum. By using a greedy algorithm we obtain a $\big(1+\varepsilon/\delta_{\min}+\ln(\delta_{\max}/\delta_{\min})\big)+O(1)$-approximation result. This result extends the approximation guarantee of \citep{SHI2021126442} from integer-valued $\varepsilon$-approximately submodular functions to the case of fraction-valued $\varepsilon$-approximately submodular functions. Additionally, we introduce a more general notion of $\varepsilon$-approximate submodularity and prove that the same approximation guarantee holds.

In Section~\ref{sec:total}, we define the fault-tolerant total domination problem. For this problem, we define a submodular function deciding if a set is a total dominating set. In other words, for every total dominating set the function takes the maximum value. By using a greedy algorithm we obtain a first $1 + \ln(\Delta + m - 1)$ approximation result. 

In Section~\ref{sec:WPPI}, we generalize partial positive influence domination problems, the simple, total, and connected versions, by allowing non-integer, that is, rational weights (WPPIDS, WPPITDS, and WPPCDS respectively). We obtain approximation results for these problems by using a greedy algorithm. In subsection~\ref{sec:WPPIDS}, we define a submodular function that decides if a set is WPPIDS and we obtain a first $\big(1+\ln(\frac{3}{2}\cdot L \cdot W)\big)$ approximation result, where $W$ is equal to the maximum sum of weights of a node's incident edges, that is, an extension of maximum node degree $\Delta$, and $L$ is a special factor depending on the weights. In subsection~\ref{sec:WPPITDS}, we define a submodular function that decides if a set is WPPITDS and we obtain a first $\big(1+\ln(\frac{3}{2}\cdot L \cdot W+\Delta)\big)$ approximation result. In subsection~\ref{sec:WPPICDS}, we define a non-submodular function that decides if a set is WPPICDS. The function has a special property called \textit{conditional submodularity gap}, which is described in subsection~\ref{sec:con-sub-gap}. By applying the framework in subsection~\ref{sec:Submodularity gap Maximization}, we obtain a first $\big(2+\ln(\frac{3}{2}\cdot L\cdot W + \Delta )\big)$ approximation result.  All results are a generalization of \citep{zhu2010new} (for unit weights, it holds $W=\Delta$ and $L=1$) and can be extended to degree percentage constraints problems \citep{zhong2023unified} (by simple modification of the potential function).

\section{Preliminaries}\label{sec:prel}

Let $G=(V, E, w)$ a weighted graph, where $w: E\rightarrow\mathbb{R}^+$.
The \textit{open neighborhood} of node $v$ is the set of all its neighbors and is denoted by $N(v)$.
Let the maximum degree of the graph be denoted by $\Delta = \max\limits_{v \in V} |N(v)|$.
The open neighborhood of $v$ in the subset of nodes $C\subseteq V$ is denoted by $N_C(v)=N(v)\cap C$. 
For a singleton set $\{x\}$, we simplify the notation from $N_{\{x\}}(v)$ to $N_x(v)$. Also, for a weighted graph, we denote $W_A(v)=\displaystyle\sum_{\substack{i\in N_A(v)}} w_{(v, i)}$, $W(v) = W_V(v)$ and $W=\max\limits_{v\in V}W(v)$.

Let $U$ denote a universe of elements.
The set of all subsets of $U$ is denoted by $2^U$.
A function $f:2^U\rightarrow\mathbb{R}$ is called \textit{non-decreasing} if for any $A\subseteq B\subseteq U$ it holds $f(A)\leq f(B)$. 
Let $\Delta_{x}f(A)=f(A\cup \{x\})-f(A)$.
A function $f:2^U\rightarrow\mathbb{R}$ is \textit{submodular} if for any $A\subseteq B\subseteq U$, $x \in U\setminus B$, it holds $\Delta_{x}f(B)\leq \Delta_{x}f(A)$. The maximum value of a function is denoted $f_{\max}=\max\limits_{{X\subseteq U}}f(X)$.

A set $S \subseteq V$ is a \textit{Dominating Set} (DS) of $G$ if every node in $V\setminus S$ has at least one neighbor in $S$. The minimum dominating set is the problem of finding a DS with minimum cardinality. A set 
$S$ is a \textit{Fault-Tolerant Dominating Set} ($m$-DS) if every node in $V\setminus S$ has at least $m$ neighbors in $S$. The minimum fault-tolerant dominating set is the problem of finding a m-DS with minimum cardinality. A set
$S$ is a \textit{Total Dominating Set} (TDS) if $S$ is a dominating set and every node in $S$ has at least one neighbor in $S$.
Equivalently, every node in $V$ has at least one neighbor in $S$. The minimum total dominating set is the problem of finding a TDS with minimum cardinality. A set
$S$ is a \textit{Fault-Tolerant Total Dominating Set} (m-TDS) if it is fault-tolerant dominating and every node in $S$ has at least one neighbor in $S$. The minimum fault-tolerant total dominating set is the problem of finding a m-TDS with minimum cardinality.

A set $S \subseteq V$ is a \textit{Partial Positive Influence Dominating Set (PPIDS)} of $G$ if every node $v\in V\setminus S$ has at least $\lceil N(v)/2 \rceil$ neighbors in $S$. In the weighted version, a set $S \subseteq V$ is a \textit{Weighted Partial Positive Influence Dominating Set (WPPIDS)} of $G=(V,E,w)$ if for every node $v\in V\setminus S$ it holds $W_S(v)\geq  W(v)/2$. The minimum weighted partial positive influence dominating set is the problem of finding a WPPIDS with minimum cardinality. In the total version, the extra condition is the set $S$ must be total (WPPITDS). The minimum weighted partial positive influence total dominating set is the problem of finding a WPPITDS with minimum cardinality. In the connected version, the extra condition is the set $S$ must be connected (WPPICDS). The minimum weighted partial positive influence connected dominating set is the problem of finding a WPPICDS with minimum cardinality.

Let $U$ be a finite set (universe) of elements and $f:2^U\rightarrow\mathbb{R}$. In the Function Maximization Problem, we wish to find a minimum-cardinality subset $S\subseteq U$ of elements that maximizes function $f$.

\section{General Greedy Non-Submodular Approximation}\label{sec:ggs}

In this section, we present a greedy approximation algorithm for a variant of Set Function Optimization, when the function is non-decreasing and $\varepsilon$-approximately submodular.

Let $U$ be a finite set (universe) of elements.
For our purposes, a problem with universe $U$ may be defined as a function $Q: 2^U \rightarrow \{0,1\}$, such that:
\begin{itemize}
    \item[$\bullet$] $Q(S) = 1$ $\iff$ $S\subseteq U$ is a feasible solution to the problem,
    \item[$\bullet$] $Q(S) = 0$ otherwise.
\end{itemize}

Let $\mathcal{Q} = \{S \subseteq U: Q(S) = 1
\}$.
Assume there is a function $f:2^{U}\rightarrow\mathbb{R}$ with the property:
\[
\big(\forall C\subseteq U\big) \ f(C)=\max\limits_{{U'\subseteq U}}f(U') \iff C\in \mathcal{Q}.
\]

Let $S^*$ be a member of $\mathcal{Q}$ with minimum cardinality.
In Algorithm~\ref{alg:GC}, we introduce the well-known Greedy Constructor to return a solution approximating the size of $S^*$.

 \begin{algorithm}[ht]\label{alg:GC}
    \caption{Greedy Constructor}
    \SetAlgoLined
    \DontPrintSemicolon
    
        \KwInput{Universe $U$}
        \KwOutput{$S \in \mathcal{Q}$}
        $S \gets \emptyset$\; 
        \While{$\big(\exists u\in U\setminus S\big) \ \Delta_{u}f(S)>0$}{
            $x \gets \argmax\limits_{{u\in U \setminus S}}\Delta_{u}f(S)$\;
            $S \gets S\cup\{x\}$
        }
        Return $S$
        
    \end{algorithm}

To begin with the analysis of Greedy Constructor, we first define what is a  \textit{Greedy Maximum Differential Set}, which will be returned by the algorithm.

\begin{definition}[Greedy Maximum Differential Set]\label{def:gmds}
    Let $U \subseteq \mathbb{N}$ and $f:2^U\rightarrow\mathbb{R}$.
    Let $S\subseteq U$ and a total order of the elements in $S$, namely $s_1, s_2, \ldots, s_{|S|}$.
    Let $S_{i} =\{s_1, s_2,...,s_{i}\}\subseteq S$, where $S_0 = \emptyset$.
    A set $S\subseteq U$ is called a \textit{Greedy Maximum Differential Set} of $f$ if there is a total order of the elements of $S$ such that for each $s_i$ it holds $\Delta_{s_{i}}f(S_{i-1})\geq \Delta_{u}f(S_{i-1})$ for all $u\in U$.
 \end{definition} 
    
\begin{definition}
    The set of all Greedy Maximum Differential Sets of $f$ is denoted by $D_f$.
\end{definition}

Moving on, we give a definition for a special case of non-submodular functions called $\varepsilon$-approximately submodular \citep{qian2019maximizing}.

  \begin{definition}[$\varepsilon$-approximately submodular]\label{def:b-gap}
    Let function $f:2^U\rightarrow\mathbb{R}$. The function $f$ is $\varepsilon$-approxima-tely submodular if there is $\varepsilon\in [0,+\infty)$ such that for every $A\subseteq B\subseteq U$ and $x \in U\setminus B$ it holds $\Delta_{x}f(B)\leq \Delta_{x}f(A)+\varepsilon$. The minimum value among all $\varepsilon\in [0,+\infty)$ such that $f$ is $\varepsilon$-approximately submodular is called the \textit{submodularity gap}.
 \end{definition}

The next corollary and theorem are some observations about $\varepsilon$-approximately submodular functions.

\begin{corollary}\label{cor:b=0_submodular}
    A function $f$ is $0$-approximately submodular if and only if $f$ is submodular.
\end{corollary}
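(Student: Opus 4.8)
The plan is to unfold both definitions and verify the biconditional directly. Recall that $f$ is submodular exactly when $\Delta_x f(B) \leq \Delta_x f(A)$ for all $A \subseteq B \subseteq U$ and (here) $x \in U$, while $f$ is $0$-approximately submodular exactly when $\Delta_x f(B) \leq \Delta_x f(A) + 0 = \Delta_x f(A)$ for all $A \subseteq B \subseteq U$ and $x \in U \setminus B$. So the two conditions are almost literally the same, and the entire content of the corollary is reconciling the quantifier over $x$: submodularity as stated in the Preliminaries quantifies over all $x \in U$, whereas Definition~\ref{def:b-gap} quantifies only over $x \in U \setminus B$.

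First I would prove the easy direction: if $f$ is submodular, then for every $A \subseteq B \subseteq U$ and every $x \in U \setminus B \subseteq U$ we immediately get $\Delta_x f(B) \leq \Delta_x f(A) = \Delta_x f(A) + 0$, so $f$ is $0$-approximately submodular. For the converse, suppose $f$ is $0$-approximately submodular and take arbitrary $A \subseteq B \subseteq U$ and $x \in U$; I must show $\Delta_x f(B) \leq \Delta_x f(A)$. If $x \notin B$, this is exactly the hypothesis with $\varepsilon = 0$. The only case requiring a separate argument is $x \in B$ (hence also possibly $x \in A$): here I would observe that $\Delta_x f(C) = f(C \cup \{x\}) - f(C) = f(C) - f(C) = 0$ whenever $x \in C$, so if $x \in A$ then both differentials vanish and the inequality holds trivially; if $x \in B \setminus A$, then $\Delta_x f(B) = 0$ while $\Delta_x f(A) = \Delta_x f(A)$, and I need $0 \leq \Delta_x f(A)$, i.e.\ monotonicity-type nonnegativity of the marginal — which is \emph{not} automatic from $0$-approximate submodularity alone.

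The main obstacle is precisely this last subcase, and I expect the resolution is that it is vacuous or handled by convention: the standard definition of submodularity via marginals is typically only asserted for $x \notin B$ (adding an element already present changes nothing), so the ``for all $x \in U$'' phrasing in the Preliminaries should be read as ``for all $x \in U \setminus B$'' as well — in which case the two definitions coincide verbatim and the corollary is immediate. I would therefore write the proof treating the marginal condition with $x \in U \setminus B$ on both sides, note explicitly that including $x \in B$ adds nothing since $\Delta_x f(C) = 0$ for $x \in C$ makes both the submodular inequality and the $\varepsilon$-gap inequality degenerate to statements that are either trivial or conventionally excluded, and conclude equivalence. The write-up is short: two one-line implications plus a remark on the quantifier convention.

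\begin{proof}
Assume first that $f$ is submodular. Then for every $A \subseteq B \subseteq U$ and every $x \in U \setminus B$ we have $\Delta_x f(B) \leq \Delta_x f(A) = \Delta_x f(A) + 0$, so $f$ is $\varepsilon$-approximately submodular with $\varepsilon = 0$.

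Conversely, assume $f$ is $0$-approximately submodular, i.e.\ $\Delta_x f(B) \leq \Delta_x f(A) + 0 = \Delta_x f(A)$ for all $A \subseteq B \subseteq U$ and all $x \in U \setminus B$. This is exactly the defining inequality of submodularity (note that for $x \in B$ one has $\Delta_x f(B) = f(B \cup \{x\}) - f(B) = 0$, so the submodular inequality is only substantive for $x \notin B$). Hence $f$ is submodular.
\end{proof}
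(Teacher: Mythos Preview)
Your proposal is correct and essentially matches the paper, which in fact states the corollary without proof, treating it as an immediate consequence of Definition~\ref{def:b-gap}. You are more careful than the paper in flagging and resolving the quantifier mismatch (the Preliminaries write ``$x \in U$'' while Definition~\ref{def:b-gap} writes ``$x \in U \setminus B$''); your observation that the $x \in B$ case is either degenerate or conventionally excluded is the right way to reconcile this, and the paper evidently intends the standard reading with $x \notin B$.
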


\begin{theorem}\label{thm:beta}
    Every function $f:2^U\rightarrow\mathbb{R}$, where $U$ is finite, is $\varepsilon$-approximately submodular.
    
\end{theorem}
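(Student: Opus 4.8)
The plan is to exploit finiteness directly: since $U$ is finite, $2^U$ is finite, so $f$ attains only finitely many values, and therefore the collection of all possible marginal increments $\Delta_x f(A)$ is a finite subset of $\mathbb{R}$. The extreme values of this finite set will supply a uniform bound $\varepsilon$ that works for every pair $A \subseteq B$ simultaneously.

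Concretely, I would first introduce the set
\[
D = \bigl\{\, \Delta_x f(A) \;:\; A \subseteq U,\ x \in U \,\bigr\} \subseteq \mathbb{R}.
\]
If $U = \emptyset$ (or more generally $D = \emptyset$), the defining inequality of Definition~\ref{def:b-gap} is vacuous and $f$ is trivially $0$-approximately submodular, so assume $D \neq \emptyset$. Since $U$ is finite, $2^U$ is finite, hence $D$ is finite and both $M := \max D$ and $\mu := \min D$ exist, with $M \geq \mu$. Set $\varepsilon := M - \mu \in [0, +\infty)$.

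It then remains to check the inequality. For arbitrary $A \subseteq B \subseteq U$ and $x \in U \setminus B$, both $\Delta_x f(B)$ and $\Delta_x f(A)$ lie in $D$, so $\Delta_x f(B) \leq M$ and $\Delta_x f(A) \geq \mu$, whence
\[
\Delta_x f(B) \;\leq\; M \;=\; \mu + \varepsilon \;\leq\; \Delta_x f(A) + \varepsilon .
\]
Thus $f$ is $\varepsilon$-approximately submodular, proving the claim.

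There is essentially no hard step here; the only thing to be careful about is the degenerate case where there are no valid triples $(A,B,x)$ (empty universe), which is handled by the vacuous-truth observation, and the observation that $\varepsilon = M - \mu$ is automatically nonnegative so that it is an admissible value in $[0,+\infty)$. One may additionally remark, to connect with Corollary~\ref{cor:b=0_submodular}, that this $\varepsilon$ need not be the submodularity gap of $f$; it is merely an upper bound witnessing membership in the class, which is all the statement requires.
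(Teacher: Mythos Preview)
Your proof is correct and follows essentially the same approach as the paper: define the set of all marginal increments, take $\varepsilon$ to be the difference of its maximum and minimum (which exist by finiteness of $U$), and observe that this uniformly bounds $\Delta_x f(B) - \Delta_x f(A)$. You are in fact slightly more careful than the paper, since you explicitly handle the degenerate case $D=\emptyset$ and note that $\varepsilon \ge 0$.
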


\begin{proof}
    Let $D=\{d\in\mathbb{R}: d= \Delta_xf(A), \ x\in U \ and \ A\subseteq U\}$. Because $U$ is finite then the set $D$ has minimum and maximum. Let $\varepsilon=\mu-\lambda$, where $\mu=\max(D)$ and $\lambda=\min(D)$. Then, for every $A\subseteq B\subseteq U$ and $x\in U$, it holds:
    \begin{align*}
        \lambda\leq\Delta_xf(A) \ and \ \Delta_xf(B) \leq \mu = \lambda + \mu-\lambda\leq \Delta_xf(A) + \varepsilon.
    \end{align*}
\end{proof}

\subsection{Non-Submodular Maximization with Submodularity gap}\label{sec:Submodularity gap Maximization}

    Let $f$ be a non-decreasing and $\varepsilon$-approximately submodular function.
    In the following key Lemma, we bound the size of a greedy maximum differential set $S$ to be a logarithmic approximation of the size of an optimal solution achieving $f_{\max}$. Let $S\in D_f$. Since the function is not submodular, we redefine the values of $\delta_{\max}$ and $\delta_{\min}$ as follows:
    \[\delta_{\max}=\Delta_{s_1}f(\emptyset)=\max\limits_{x\in U}\big(\Delta_{x}f(\emptyset)\big), \ \ \delta_{\min}= 
    \min\limits_{\substack{i\in\{1,\cdots,|S|-1\}}}\big(\Delta_{s_i}f(S_{i-1})\big)\geq
    \min\limits_{\substack{x\in U\setminus A, \ A\subseteq U \\ \Delta_{x}f(A)>0}}\big(\Delta_{x}f(A)\big).
    \]
     
    \begin{lemma}\label{lem:gap-greedy-approximation}
        Let $U$ be a finite set and $f: 2^U\rightarrow\mathbb{R}$ non-decreasing and $\varepsilon$-approximately submodular function.
        Let $S\subseteq U$ with the properties:
        \[
            i) \ S\in D_f, \ \ \ ii) \ f(S)=f_{\max}, \ \ \ iii) \ \delta_{\min}> 0.
        \]
        For every set $C\subseteq U$ with the property $f(C)=f_{\max}$, it holds
        \[
        |S|<\Bigg(1+\frac{\varepsilon}{\delta_{\min}}+\ln\bigg(\frac{\delta_{\max}}{ \delta_{\min}}\bigg)\Bigg)\cdot |C|+1.
        \]
    \end{lemma}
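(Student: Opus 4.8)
The plan is to run the classical Wolsey-style greedy analysis on the greedy maximum differential set $S=\{s_1,\dots,s_n\}$ (listed in its witnessing order), modified to absorb the submodularity gap $\varepsilon$. Put $S_i=\{s_1,\dots,s_i\}$, $b_i=\Delta_{s_i}f(S_{i-1})>0$, and $R_i=f_{max}-f(S_{i-1})$, so that $R_{n+1}=0$, $R_i=\sum_{j\ge i}b_j$ is strictly decreasing, and $\delta_{min}\le b_1=\delta_{max}$ (the cases $n\le 1$ and $|C|=0$ being trivial, assume $n\ge 2$, $|C|\ge 1$). The heart of the argument is the covering inequality $R_i\le |C|\,(b_i+\varepsilon)$ for every $i\le n$: since $f$ is non-decreasing and $f(C)=f_{max}$ we have $f(S_{i-1}\cup C)=f_{max}$, hence $R_i=f(S_{i-1}\cup C)-f(S_{i-1})$; expanding this as a telescoping sum over the at most $|C|$ elements of $C\setminus S_{i-1}$, bounding each increment at a superset of $S_{i-1}$ by the increment at $S_{i-1}$ plus $\varepsilon$ ($\varepsilon$-approximate submodularity), and bounding that latter increment by $b_i$ (greedy maximality of $s_i$, i.e. $S\in D_f$), shows each of the $\le|C|$ terms is at most $b_i+\varepsilon$.

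From the covering inequality, $b_i\ge R_i/|C|-\varepsilon$, so $R_{i+1}=R_i-b_i\le(1-\tfrac1{|C|})R_i+\varepsilon$. Introducing the shifted residual $Q_i:=R_i-|C|\varepsilon$ removes the $+\varepsilon$ and yields the pure contraction $Q_{i+1}\le(1-\tfrac1{|C|})Q_i$, together with $Q_1\le|C|\delta_{max}$ (from the covering inequality at $i=1$ and $b_1=\delta_{max}$). I would then split the greedy steps at the last index $k$ with $R_k\ge|C|(\delta_{min}+\varepsilon)$, equivalently $Q_k\ge|C|\delta_{min}>0$. On the heavy block $1,\dots,k$, iterating the contraction and using $-\ln(1-\tfrac1{|C|})\ge\tfrac1{|C|}$ in $|C|\delta_{min}\le Q_k\le(1-\tfrac1{|C|})^{k-1}Q_1\le(1-\tfrac1{|C|})^{k-1}|C|\delta_{max}$ gives $k\le|C|\ln(\delta_{max}/\delta_{min})+1$. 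On the light block $k+1,\dots,n-1$, each step has $b_i\ge\delta_{min}$ (these indices are $\le n-1$), while $\sum_{i=k+1}^{n-1}b_i<R_{k+1}<|C|(\delta_{min}+\varepsilon)$, so there are fewer than $|C|(1+\varepsilon/\delta_{min})$ of them. Summing the two blocks and the single remaining step $s_n$ — the one deliberately excluded from the definition of $\delta_{min}$ — produces a bound of the claimed shape; the degenerate cases $k=0$ and $k=n$ collapse to one block and are treated directly.

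I expect the main obstacle to be the additive bookkeeping. The covering inequality and each block estimate in isolation are routine, but a naive sum leaves an additive constant larger than $1$: the heavy block contributes a ``$+1$'' for its fractional last step, the excluded step $s_n$ contributes another, and passing from $\le$ to the strict inequality adds slack. The work is to collapse these into the single ``$+1$'' of the statement — by exploiting the strict inequality $\ln(1-x)<-x$ for $x\in(0,1)$ and the integrality of the step counts, and by choosing the residual shift (either $|C|\varepsilon$, or the slightly sharper $(|C|-1)\varepsilon$ coming from the first telescoped increment being gap-free) so that $Q_1\le|C|\delta_{max}$ and the light-block budget $|C|(\delta_{min}+\varepsilon)$ emerge with exactly the constants in $\big(1+\varepsilon/\delta_{min}+\ln(\delta_{max}/\delta_{min})\big)|C|+1$.
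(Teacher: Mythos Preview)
Your proposal is correct and follows essentially the same approach as the paper: the same covering inequality $R_i\le|C|(b_i+\varepsilon)$, the same shifted recursion (the paper writes out $a_i\le(a_0-\varepsilon|C|)(1-1/|C|)^i+\varepsilon|C|$, equivalent to your $Q_i$), and the same split threshold $\delta_{min}+\varepsilon$, which the paper reaches by optimizing a free parameter $\delta$ via calculus rather than guessing it. For the additive bookkeeping you flag, the paper avoids the extra $+1$ by \emph{not} separating out $s_n$: its Proposition~\ref{claim:S-upper} folds all remaining steps, including the last one, into the light block via $\delta_{min}(|S|-k)\le a_k$ directly, giving $|S|<\delta|C|/\delta_{min}+i_0+1$ in one stroke.
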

    \begin{proof}
        Let $S=\{s_1, s_2, ... , s_{|S|}\} \in D_f$ such that $f(S) = f_{\max}$ and $\delta_{\min} > 0$.
        Let $C=\{c_1, c_2, ... , c_{|C|}\}\subseteq U$ with $f(C)=f_{\max}$, where $c_1, c_2, ... , c_{|C|}$ is an arbitrary order of the elements of C.
        Let $S_i=\{s_1, s_2, ... , s_i\}$, and $C_i=\{c_1, c_2, ... , c_i\}$, where $S_0=C_0=\emptyset$. 
    
        Since $f(C) = f_{\max}$ and $f$ is non-decreasing, we get $f(S_i\cup C) = f_{\max}$ for any set $S_i$.
        
        It follows,
        \begin{align*}
            f_{\max}-f(S_{i-1}) &= f(S_{i-1}\cup C) - f(S_{i-1})\\
            &= f(S_{i-1}\cup C_{|C|}) - f(S_{i-1} \cup C_0)\\
            &= f(S_{i-1}\cup C_{|C|})\\
            & \quad -f(S_{i-1}\cup C_{|C|-1})+f(S_{i-1}\cup C_{|C|-1})\\
            & \hspace{1.5in}\vdots\\
            & \quad -f(S_{i-1}\cup C_{1})+f(S_{i-1}\cup C_{1})\\
            & \quad - f(S_{i-1}\cup C_0)\\
            &= \sum_{j=1,\ldots,|C|}f(S_{i-1}\cup C_{j})-f(S_{i-1}\cup C_{j-1})\\
            &= \sum_{j=1,\ldots,|C|}\Delta_{c_j}f(S_{i-1}\cup C_{j-1})
        \end{align*}
        since by definition $C_{|C|} = C$, $C_0 = \emptyset$.
        
        Since $f$ is $\varepsilon$-approximately submodular, for any $c_j$, it follows: $$\Delta_{c_j}f(S_{i-1}\cup C_{j-1}) \le \Delta_{c_j}f(S_{i-1})+\varepsilon.$$
        Let $c_{j'} \in C$ be the element maximizing $\Delta_{c_j}f(S_{i-1})$.
        Then, 
        \begin{align*}
        \sum_{j=1,...,|C|}(\Delta_{c_j}f(S_{i-1})+\varepsilon)&\leq |C|\cdot(\Delta_{c_{j'}}f(S_{i-1})+\varepsilon)\\
        &\leq |C|\cdot(\Delta_{s_i}f(S_{i-1})+\varepsilon),
        \end{align*}
        where $\Delta_{c_{j'}}f(S_{i-1}) \le \Delta_{s_i}f(S_{i-1})$, since $S \in D_f$ (Definition~\ref{def:gmds}).
        
        Overall, we have
        \begin{align}
            f_{\max}-f(S_{i-1})&\leq|C|\cdot(\Delta_{s_i}f(S_{i-1})+\varepsilon)\nonumber\\
            \frac{f_{\max}-f(S_{i-1})}{|C|}&\leq f(S_i)-f(S_{i-1})+\varepsilon \label{eq:b-gap}\\
            -f(S_i)&\leq -f(S_{i-1})-\frac{f_{\max}-f(S_{i-1})}{|C|} +\varepsilon \nonumber\\
            f_{\max}-f(S_i)&\leq f_{\max}-f(S_{i-1})-\frac{f_{\max}-f(S_{i-1})}{|C|}+\varepsilon \nonumber
        \end{align}
        Let $a_i = f_{\max}-f(S_i)$ for any $i$. Then, by induction, it follows:
        \begin{align}
        a_i&\leq a_{i-1}-\frac{a_{i-1}}{|C|}+\varepsilon =
        a_{i-1}\bigg(1-\frac{1}{|C|}\bigg)+\varepsilon \leq
        \cdots 
        \leq a_0\bigg(1-\frac{1}{|C|}\bigg)^i +\varepsilon\cdot\sum_{k=0}^{i-1}\bigg(1-\frac{1}{|C|}\bigg)^k \nonumber\\
        &=a_0\bigg(1-\frac{1}{|C|}\bigg)^i +\varepsilon\cdot |C| \Bigg(1-\bigg(1-\frac{1}{|C|}\bigg)^i\Bigg)=
            (a_0-\varepsilon\cdot |C|)\bigg(1-\frac{1}{|C|}\bigg)^i+\varepsilon\cdot |C| \nonumber\\
         &\leq (a_0-\varepsilon\cdot |C|)\cdot e^{-\frac{i}{|C|}}+\varepsilon\cdot |C| \nonumber
        \end{align}
        \noindent since for any $x \in \mathbb{R}$ it holds $(1+x) \le e^{x}$.
        
        \begin{proposition}\label{claim:S-upper}
            Let $\delta>0$.
            For every $k\in \{0, ... ,|S|\}$, if $a_{k}\leq \delta\cdot |C|$, then $|S|\leq \frac{\delta\cdot |C|}{\delta_{\min}}+k$. Also, if $a_{k}< \delta\cdot |C|$, then $|S|< \frac{\delta\cdot |C|}{\delta_{\min}}+k$.
        \end{proposition}
        \begin{proof}
            We show the proof only for the case [$\leq$]. The case [$<$] follows similarly.
            
            We first show $\delta_{\min}\cdot (|S|-k)\leq a_k$.
            We rewrite the left part as: $$\delta_{\min}\cdot (|S|-k) = \sum\limits_{i=k+1}^{|S|} \delta_{\min}\leq \sum\limits_{i=k+1}^{|S|}\Delta_{s_i}f(S_{i-1}),$$ where the inequality follows by definition of $\delta_{\min}$.
            We proceed to show equality of the right part to $a_k$:
            \begin{align*}
             &\sum\limits_{i=k+1}^{|S|}\Delta_{s_i}f(S_{i-1})=
                \sum\limits_{i=k+1}^{|S|} f(S_{i})-f(S_{i-1})= \\
                &= f(S_{|S|})
                +\big(-f(S_{|S|-1})+f(S_{|S|-1})\big)
                +\cdots
                +\big(-f(S_{k+1})+f(S_{k+1})\big)
                - f(S_{k})\\
                &=f(S_{|S|})-f(S_{k})\\
                &=f_{\max}-f(S_{k}) =a_{k}.
            \end{align*}
            Since by assumption $a_k \le \delta\cdot |C|$, it follows $\delta_{\min}\cdot (|S|-k) \le \delta\cdot|C|$, which completes the proof.
        \end{proof}
        
        We continue the proof of the Lemma. For some $\delta > \varepsilon \geq 0$, we distinguish two cases. 
        
        \noindent If $a_0\leq \delta\cdot |C|$, then by Proposition~\ref{claim:S-upper} it holds $|S|\leq \frac{\delta\cdot |C|}{\delta_{\min}}$ (case I).
        
        \noindent If $a_0> \delta\cdot |C|$, then since $a_i$ is decreasing (because $\delta_{\min}>0$), by definition there exists $i_0$ such that: $$a_{i_0+1}< \delta\cdot |C|\leq a_{i_0}.$$
        We first upper bound $i_0$ based on the right inequality. We have:
        \begin{align}
        a_{i_0} &\le (a_0-\varepsilon\cdot |C|)\cdot e^{-\frac{i_0}{|C|}}+\varepsilon\cdot |C| \iff\nonumber \\
            \delta\cdot |C|&\leq (a_0-\varepsilon\cdot |C|)\cdot e^{-\frac{i_0}{|C|}}+\varepsilon\cdot |C| \iff\nonumber \\
            e^{\frac{i_0}{|C|}} &\leq \frac{(a_0-\varepsilon\cdot |C|)}{(\delta-\varepsilon)\cdot|C|} \iff\nonumber\\
            i_0&\leq |C|\ln\bigg(\frac{a_0-\varepsilon\cdot |C|}{(\delta-\varepsilon)\cdot|C|}\bigg).
            \label{eq:i_0}
        \end{align}
        The last inequality stands because $\delta>\varepsilon$ and so $a_0>\delta\cdot|C|>\varepsilon\cdot|C|$. To complete the proof, we now upper bound $S$ based on left inequality.
        Since $a_{i_0+1}< \delta\cdot |C|$, by Proposition~\ref{claim:S-upper}, it follows $|S| < \frac{\delta\cdot |C|}{\delta_{\min}}+i_0+1$ (case II).
        
        Note that it suffices to examine case II, since it contains case I. It follows:
        \begin{align}
            |S|&< \frac{\delta\cdot |C|}{\delta_{\min}}+i_0+1 \leq^\eqref{eq:i_0}
            \frac{\delta\cdot |C|}{\delta_{\min}}+|C|\ln\bigg(\frac{a_0-\varepsilon\cdot |C|}{(\delta-\varepsilon)\cdot|C|}\bigg) + 1 \iff\nonumber
            \\
            |S|&<
            \Bigg(\frac{\delta}{\delta_{\min}}+\ln\bigg(\frac{a_0-\varepsilon\cdot |C|}{(\delta-\varepsilon)\cdot|C|}\bigg)\Bigg)\cdot |C| + 1.  \nonumber
        \end{align}
        
        Now we find a value of $\delta$ which minimizes the upper bound of $S$. Let $g(\delta) = \Bigg(\frac{\delta}{\delta_{\min}}+\ln\bigg(\frac{a_0-\varepsilon\cdot |C|}{(\delta-\varepsilon)\cdot|C|}\bigg)\Bigg)$. The function $g$ is minimized when $\delta = \delta_{\min}+\varepsilon$ because:
        
        \begin{align*}
         g'(\delta) = \frac{1}{\delta_{\min}}-\frac{1}{\delta-\varepsilon}
         \ \ \text{and} \ \
         g'(\delta) = 0 \iff \delta = \delta_{\min}+\varepsilon.
        \end{align*}
        
        \noindent So,
        
        \noindent\resizebox{\textwidth}{!}{$|S|<
            \Bigg(\frac{\delta_{\min}+\varepsilon}{\delta_{\min}}+\ln\bigg(\frac{a_0-\varepsilon\cdot |C|}{\delta_{\min}\cdot|C|}\bigg)\Bigg)\cdot |C| + 1 \leq 
            \Bigg(1+\frac{\varepsilon}{\delta_{\min}}+\ln\bigg(\frac{\frac{a_0}{|C|}-\varepsilon}{\delta_{\min}}\bigg)\Bigg)\cdot |C| + 1$}
            
            $$\iff^{(\ref{eq:b-gap})} \ |S|<
            \Bigg(1+\frac{\varepsilon}{\delta_{\min}}+\ln\bigg(\frac{\delta_{\max}}{\delta_{\min}}\bigg)\Bigg)\cdot |C| + 1.$$
    \end{proof}

    In the next theorem, we apply Lemma~\ref{lem:con-gap-greedy-approximation} to provide an approximation result for the function maximization problem, where the function $f$ is non-decreasing, and is $\varepsilon$-approximately submodular. So, for any problem defined with a potential function which is $\varepsilon$-approximately submodular, we can prove a logarithmic approximation result.
    
    \begin{theorem}\label{thm:beta-greedy}
        Algorithm~\ref{alg:GC} returns a \resizebox{0.8\width}{!}{$\bigg(1+\frac{\varepsilon}{\delta_{\min}}+\ln\Big(\frac{\delta_{\max}}{\delta_{\min}}\Big)\bigg)$}$+O(1)$-approximation to the minimum size set in $\mathcal{Q}$, if and only if $f$ is non-decreasing and $\varepsilon$-aproximately submodular function and $f(S) = f_{\max}$, where $S$ is the solution returned by the Algorithm.
    \end{theorem}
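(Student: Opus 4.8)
The plan is to read the theorem off Lemma~\ref{lem:gap-greedy-approximation}: once we check that the set $S$ returned by Algorithm~\ref{alg:GC} satisfies the three bullet hypotheses of that lemma, the ratio bound is immediate, and the biconditional collapses to the remark that — under the standing assumptions on $f$ — Algorithm~\ref{alg:GC} attains the stated guarantee precisely when its output is feasible, i.e.\ when $f(S)=f_{max}$ (equivalently, $S\in\mathcal{Q}$, by the defining property of $f$).

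\emph{The ``if'' direction.} Assume $f$ is non-decreasing and $\varepsilon$-approximately submodular (the latter being automatic for finite $U$ by Theorem~\ref{thm:beta}), and suppose the returned set $S$ has $f(S)=f_{max}$. Order the elements of $S$ as $s_1,\dots,s_{|S|}$ in the order the \textbf{while} loop inserts them, and write $S_i=\{s_1,\dots,s_i\}$. I would first verify $S\in D_f$ in the sense of Definition~\ref{def:gmds}: for each $i$ the line $x\gets\argmax_{u\in U\setminus S_{i-1}}\Delta_u f(S_{i-1})$ gives $\Delta_{s_i}f(S_{i-1})\ge\Delta_u f(S_{i-1})$ for all $u\in U\setminus S_{i-1}$, while for $u\in S_{i-1}$ one has $\Delta_u f(S_{i-1})=f(S_{i-1})-f(S_{i-1})=0$, and $\Delta_{s_i}f(S_{i-1})>0$ because the loop admits $s_i$ only when its marginal is strictly positive; hence $\Delta_{s_i}f(S_{i-1})\ge\Delta_u f(S_{i-1})$ for every $u\in U$, as required. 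That same strict positivity of each selected marginal gives $\delta_{min}>0$ (the cases $|S|\le 1$ being trivial), and $f(S)=f_{max}$ is assumed, so all three hypotheses of Lemma~\ref{lem:gap-greedy-approximation} hold. Letting $S^*$ be a minimum-cardinality element of $\mathcal{Q}$ — which satisfies $f(S^*)=f_{max}$ by the defining property of $f$ — and applying the lemma with $C=S^*$ yields
\[
|S| < \Bigl(1+\tfrac{\varepsilon}{\delta_{min}}+\ln\bigl(\tfrac{\delta_{max}}{\delta_{min}}\bigr)\Bigr)\,|S^*| + 1 ,
\]
which is the asserted $\bigl(1+\tfrac{\varepsilon}{\delta_{min}}+\ln(\tfrac{\delta_{max}}{\delta_{min}})\bigr)+O(1)$-approximation; moreover $f(S)=f_{max}$ forces $S\in\mathcal{Q}$, so the output is feasible.

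\emph{The ``only if'' direction.} Keeping $f$ non-decreasing and $\varepsilon$-approximately submodular, suppose Algorithm~\ref{alg:GC} returns a genuine approximation to the minimum-size member of $\mathcal{Q}$. Then the output $S$ must lie in $\mathcal{Q}$, and by the defining property of $f$ this is equivalent to $f(S)=f_{max}$; thus the extra clause in the statement is exactly the feasibility demand already built into the word ``approximation''. If one additionally drops monotonicity the claim genuinely breaks: the telescoping step in the proof of Lemma~\ref{lem:gap-greedy-approximation} relied on $f(S_i\cup C)=f_{max}$, which uses $f$ non-decreasing, and without it the greedy loop may terminate at a local maximum of value strictly below $f_{max}$, in which case no bound on $|S|$ against $|S^*|$ is available.

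\emph{Main obstacle.} The only genuinely delicate point is the verification $S\in D_f$: Definition~\ref{def:gmds} demands the greedy inequality for \emph{all} $u\in U$, whereas the algorithm's $\argmax$ ranges only over $U\setminus S_{i-1}$; the bridge is precisely the pair of facts that $\Delta_u f(S_{i-1})=0$ for $u$ already chosen and that each selected marginal is strictly positive. Everything else — identifying the loop's termination condition with $\delta_{min}>0$, reading $f(S^*)=f_{max}$ off the definition of $f$, and absorbing the trailing $+1$ into the $O(1)$ term — is routine.
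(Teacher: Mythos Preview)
Your proof is correct and follows essentially the same route as the paper: verify that the greedy output $S$ satisfies the three hypotheses of Lemma~\ref{lem:gap-greedy-approximation} (membership in $D_f$, $f(S)=f_{max}$, $\delta_{min}>0$), then apply that lemma with $C$ a minimum-size member of $\mathcal{Q}$. You are in fact more careful than the paper's one-line proof, which simply asserts ``by definition of Algorithm~\ref{alg:GC} and function $f$'' that these hypotheses hold; your explicit treatment of the $u\in S_{i-1}$ case in the $D_f$ verification and your discussion of the ``only if'' direction (which the paper does not address) are welcome additions.
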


    \begin{proof}
        By definition of Algorithm \ref{alg:GC} and function $f$, it holds
        $S\in D_f, \ f(S)=f_{\max}$, and $\delta_{\min} > 0$.
        If $C$ is a set of minimum cardinality within $\mathcal{Q}$, by Lemma~\ref{lem:gap-greedy-approximation} it holds
        \[|S|< \Bigg(1+\frac{\varepsilon}{\delta_{\min}}+\ln\bigg(\frac{\delta_{\max}}{\delta_{\min}}\bigg)\Bigg)\cdot |C| + 1.
        \]
    \end{proof}
    
    By using the next corollary, we can simplify some approximation results.
    \begin{corollary}\label{cor:simplify-big-o}
        If $\varepsilon$ is multiple of $\delta_{\min}$ then the approximation result of Theorem~\ref{thm:beta-greedy} is simplified to
        
        \noindent$\Big(1+\frac{\varepsilon}{\delta_{\min}}+\ln\big(\frac{\delta_{\max}}{\delta_{\min}}\big)\Big)$. 
    \end{corollary}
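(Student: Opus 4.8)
The plan is to revisit the proof of Lemma~\ref{lem:gap-greedy-approximation} and show that, under the hypothesis $\varepsilon = p\cdot\delta_{min}$ with $p$ a non-negative integer, the additive ``$+1$'' in its conclusion (equivalently, the $O(1)$ term in Theorem~\ref{thm:beta-greedy}) can be dropped. Write $S$ for the set returned by Algorithm~\ref{alg:GC} and $C$ for a minimum-cardinality member of $\mathcal{Q}$; as noted in the proof of Theorem~\ref{thm:beta-greedy}, we have $S\in D_f$, $f(S)=f(C)=f_{max}$ and $\delta_{min}>0$, so the analysis of Lemma~\ref{lem:gap-greedy-approximation} applies verbatim with the optimal choice $\delta=\delta_{min}+\varepsilon$ (which is admissible since $\delta_{min}>0$ gives $\delta>\varepsilon$). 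Writing $r=1+\frac{\varepsilon}{\delta_{min}}+\ln\big(\frac{\delta_{max}}{\delta_{min}}\big)$, the goal is to strengthen the strict bound $|S|<r\cdot|C|+1$ from that lemma to $|S|\le r\cdot|C|$.

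The key observation is an integrality one: with $\delta=\delta_{min}+\varepsilon=(p+1)\delta_{min}$ we get $\frac{\delta\cdot|C|}{\delta_{min}}=(p+1)\cdot|C|$, which is a non-negative integer because $|C|$ is. Now follow the two cases from the proof of Lemma~\ref{lem:gap-greedy-approximation}. In case~II, Proposition~\ref{claim:S-upper} gives $|S|<\frac{\delta\cdot|C|}{\delta_{min}}+i_0+1$, and here $|S|$, $i_0$ and $\frac{\delta\cdot|C|}{\delta_{min}}$ are all integers, so the strict inequality sharpens to $|S|\le\frac{\delta\cdot|C|}{\delta_{min}}+i_0$. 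Bounding $i_0$ via~\eqref{eq:i_0} together with $\frac{a_0}{|C|}-\varepsilon\le\delta_{max}$ (which follows from~\eqref{eq:b-gap}), exactly as in the last lines of that proof, yields $\frac{\delta\cdot|C|}{\delta_{min}}+i_0\le\big(1+\frac{\varepsilon}{\delta_{min}}+\ln(\frac{\delta_{max}}{\delta_{min}})\big)\cdot|C|=r\cdot|C|$, hence $|S|\le r\cdot|C|$. In case~I ($a_0\le\delta\cdot|C|$) Proposition~\ref{claim:S-upper} directly gives $|S|\le\frac{\delta\cdot|C|}{\delta_{min}}=(p+1)|C|\le r\cdot|C|$, using $\ln(\frac{\delta_{max}}{\delta_{min}})\ge0$ (from $\delta_{max}\ge\delta_{min}$, the degenerate case $|S|\le1$ being trivial). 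Thus $|S|\le r\cdot|C|$ in all cases, i.e.\ Algorithm~\ref{alg:GC} is an $r$-approximation and the $O(1)$ term vanishes. As a sanity check, setting $p=0$ (so $\varepsilon=0$ and $f$ submodular) recovers the classical bound $|S|\le\big(1+\ln(\frac{\delta_{max}}{\delta_{min}})\big)|C|$ with no additive term.

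The only genuine content is this integrality remark: the ``$+1$'' in Lemma~\ref{lem:gap-greedy-approximation} is an artifact of a strict inequality between integers, and it can be shaved off precisely when $\frac{\delta}{\delta_{min}}=1+\frac{\varepsilon}{\delta_{min}}$ is an integer, which is exactly the hypothesis that $\varepsilon$ is an integer multiple of $\delta_{min}$. No new estimate is required; the choice $\delta=\delta_{min}+\varepsilon$, the bound on $i_0$, and the final simplification are all already inside the proof of Lemma~\ref{lem:gap-greedy-approximation}. The main (minor) thing to be careful about is the bookkeeping of which quantities are integers and the admissibility $\delta>\varepsilon$ of case~II, both of which are immediate.
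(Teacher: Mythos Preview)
Your proposal is correct and follows essentially the same approach as the paper: the paper's proof also recalls the strict inequality $|S|<\frac{\delta\cdot|C|}{\delta_{min}}+i_0+1$ from Lemma~\ref{lem:gap-greedy-approximation}, observes that with $\delta=\delta_{min}+\varepsilon$ and $\varepsilon=k\cdot\delta_{min}$ the factor $\frac{\delta}{\delta_{min}}$ is an integer, and concludes via integrality. Your write-up is more careful in spelling out why the strict inequality between integers sharpens to a non-strict one, in treating case~I explicitly, and in flagging the degenerate case $|S|\le 1$, but the underlying idea is identical.
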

    \begin{proof}
        We recall the inequality $|S|< \frac{\delta\cdot |C|}{\delta_{\min}}+i_0+1$.
        
        In Lemma~\ref{lem:gap-greedy-approximation}, we set $\delta=\delta_{\min}+\varepsilon$. So, if $\varepsilon=k\cdot\delta_{\min}$ with $k\in\mathbb{N}$ then the factor $\frac{\delta}{\delta_{\min}}\in\mathbb{N}$ and the result holds.
    \end{proof}

    The above theorem is a generalization of the classical theorem for submodular functions to non-submodular functions. In the case of submodular functions, the same result is obtained. 

    \begin{corollary}\label{cor:b-to-submodular}
        If $f$ is submodular function then the approximation result of Theorem~\ref{thm:beta-greedy} is simplified to $\Big(1+\ln\big(\frac{\delta_{\max}}{\delta_{\min}}\big)\Big)$.
    \end{corollary}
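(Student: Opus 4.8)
The plan is to obtain Corollary~\ref{cor:b-to-submodular} as an immediate specialization of Theorem~\ref{thm:beta-greedy}, using the two already-established facts: the characterization of submodularity as the case of vanishing submodularity gap (Corollary~\ref{cor:b=0_submodular}), and the removal of the additive $O(1)$ slack when $\varepsilon$ is an integer multiple of $\delta_{min}$ (Corollary~\ref{cor:simplify-big-o}).

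First I would note that, since $f$ is submodular, Corollary~\ref{cor:b=0_submodular} lets us take $\varepsilon = 0$ in all of the preceding analysis; in particular the hypotheses of Theorem~\ref{thm:beta-greedy} hold with $\varepsilon = 0$, and its guarantee then reads $\bigl(1 + \tfrac{0}{\delta_{min}} + \ln(\delta_{max}/\delta_{min})\bigr) + O(1) = \bigl(1 + \ln(\delta_{max}/\delta_{min})\bigr) + O(1)$. It remains only to absorb the $O(1)$ term.

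Second I would observe that $\varepsilon = 0$ is a multiple of $\delta_{min}$ (take $k = 0$), so Corollary~\ref{cor:simplify-big-o} applies directly and deletes the additive constant, leaving exactly $\bigl(1 + \ln(\delta_{max}/\delta_{min})\bigr)$. If one prefers to avoid relying on the convention $0 \in \mathbb{N}$, the same conclusion follows by tracing through the proof of Lemma~\ref{lem:gap-greedy-approximation}: with $\varepsilon = 0$ the optimal choice becomes $\delta = \delta_{min}$, so the coefficient $\delta/\delta_{min}$ in the bound $|S| < \tfrac{\delta \cdot |C|}{\delta_{min}} + i_0 + 1$ equals $1$, whence $\tfrac{\delta \cdot |C|}{\delta_{min}} + i_0 + 1$ is an integer; since $|S| \in \mathbb{N}$, the strict inequality sharpens to $|S| \le \tfrac{\delta \cdot |C|}{\delta_{min}} + i_0 \le \bigl(1 + \ln(\delta_{max}/\delta_{min})\bigr)\cdot|C|$, which is precisely the claimed ratio with no additive slack.

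Since all ingredients are already proved, I do not expect any real obstacle here; the only point needing care is the arithmetic of the additive term, namely confirming that setting $\varepsilon = 0$ genuinely makes the leading coefficient an integer so that no rounding slack survives, and checking that the resulting ratio coincides with the classical $1 + \ln(\delta_{max}/\delta_{min})$ bound of Wolsey recalled in the Related Work section (where $\delta_{max} = \Delta$ and $\delta_{min} \ge 1$ in the integer-valued setting).
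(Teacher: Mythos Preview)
Your proposal is correct and matches the paper's approach: the paper states this corollary without proof, treating it as an immediate consequence of the preceding results, and your argument makes explicit exactly the reasoning the paper leaves implicit (apply Corollary~\ref{cor:b=0_submodular} to get $\varepsilon=0$, then Corollary~\ref{cor:simplify-big-o} with $k=0$ to drop the additive constant). Your alternative route through the proof of Lemma~\ref{lem:gap-greedy-approximation} is a nice robustness check but not needed.
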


\subsection{Conditional Submodularity}\label{sec:con-sub-gap}

In \citep{Du2008} and \citep{Zhou2014}, the authors prove approximation results for CDS and Fault-Tolerant CDS respectively, using a special case of submodularity gap. Correspondingly, we generalize the definitions and theorems of subsection 3.1.

  \begin{definition}[$\varepsilon$-approximately $\mathcal{C}$-submodular]\label{def:con-b-gap}
    Let $\mathcal{C}\subseteq 2^U$ and function $f:2^U\rightarrow\mathbb{R}$. The function $f$ is $\varepsilon$-approximately $\mathcal{C}$-submodular if for every $A\subseteq U$, $B\in \mathcal{C}$ and $x \in U\setminus B$ it holds $\Delta_{x}f(A\cup B)\leq \Delta_{x}f(A)+\varepsilon$. 
 \end{definition}
 
\begin{corollary}
    If $\mathcal{C}= 2^U$ then, the above definition is equivalent to unconditional definitions.
\end{corollary}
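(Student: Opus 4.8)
The plan is to fix an arbitrary $\varepsilon\in[0,+\infty)$ and prove that $f$ satisfies Definition~\ref{def:con-b-gap} with $\mathcal{C}=2^U$ for this $\varepsilon$ if and only if $f$ is $\varepsilon$-approximately submodular in the sense of Definition~\ref{def:b-gap}; since both conditions then carve out the same set of admissible $\varepsilon$, the corresponding submodularity gaps coincide as well, which is the content of the claimed equivalence. Each of the two implications is a single short argument.

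For the direction from Definition~\ref{def:con-b-gap} to Definition~\ref{def:b-gap}, I would take an arbitrary nested pair $A\subseteq B\subseteq U$ and $x\in U\setminus B$. Since $A\subseteq B$ we have $A\cup B=B$, so instantiating Definition~\ref{def:con-b-gap} with the set $A$, the $\mathcal{C}$-member $B\in 2^U$, and the same $x$ (which indeed lies in $U\setminus B$) gives $\Delta_xf(B)=\Delta_xf(A\cup B)\le\Delta_xf(A)+\varepsilon$ immediately. For the converse, take an arbitrary $A\subseteq U$, $B\in 2^U$, and $x\in U\setminus B$, and show $\Delta_xf(A\cup B)\le\Delta_xf(A)+\varepsilon$. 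If $x\in A$, then also $x\in A\cup B$, hence $\Delta_xf(A\cup B)=\Delta_xf(A)=0$ and the inequality holds because $\varepsilon\ge 0$. If $x\notin A$, then since also $x\notin B$ we get $x\in U\setminus(A\cup B)$, and Definition~\ref{def:b-gap} applied to the nested pair $A\subseteq A\cup B$ with this $x$ yields exactly $\Delta_xf(A\cup B)\le\Delta_xf(A)+\varepsilon$.

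I do not expect a genuine obstacle. The only point needing care is that the symbol $B$ plays different roles in the two definitions — in Definition~\ref{def:b-gap} it is the larger set of a nested pair, whereas in Definition~\ref{def:con-b-gap} it is the $\mathcal{C}$-member that gets unioned onto $A$ — which is why the converse is not a literal substitution but needs the brief case distinction on whether $x\in A$; that case split is precisely where one uses that $\Delta_xf(C)=0$ whenever $x\in C$, as recorded in the Preliminaries.
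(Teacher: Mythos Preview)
Your proof is correct. The paper states this corollary without proof, treating it as immediate from the definitions; your argument supplies the straightforward verification, including the case split on whether $x\in A$ that handles the minor mismatch between the side conditions $x\in U\setminus B$ in the two definitions.
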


Using the above definitions, we generalize Lemma~\ref{lem:gap-greedy-approximation}.

\begin{lemma}\label{lem:con-gap-greedy-approximation}
    Let $U$ be a finite set, $\mathcal{C}\subseteq 2^U$ and $f: 2^U\rightarrow\mathbb{R}$ is non-decreasing and $\varepsilon$-approximately $\mathcal{C}$-submodular function.
    Let $S\subseteq U$ with the properties:
    \[
            i) \ S\in D_f, \ \ \ ii) \ f(S)=f_{\max}, \ \ \ iii) \ \delta_{\min}> 0.
        \]
    For every set $C\subseteq U$ with the properties $C\neq\emptyset$, $f(C)=f_{\max}$ and there exists an order of element of $C$ with the property:
    \[
    C=\{c_1, c_2, \cdots, c_{|C|}\}, \ \ C_0=\emptyset, \ \ C_i = \{c_1, c_2, \cdots, c_i\} \ \ and \ \ (\forall i) \ C_i\in \mathcal{C}.\]
    It holds:
    \[
    |S|<\Bigg(1+\frac{\varepsilon}{\delta_{\min}}+\ln\bigg(\frac{\delta_{\max}}{ \delta_{\min}}\bigg)\Bigg)\cdot |C|+1.
    \]
\end{lemma}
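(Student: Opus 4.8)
The plan is to run the proof of Lemma~\ref{lem:gap-greedy-approximation} essentially verbatim, the only structural change being \emph{where} the approximate-submodularity inequality gets invoked. Fix $S=\{s_1,\dots,s_{|S|}\}\in D_f$ with $f(S)=f_{max}$ and $\delta_{min}>0$, and fix a nonempty $C=\{c_1,\dots,c_{|C|}\}$ together with the ordering guaranteed by the hypothesis, so that every prefix $C_i=\{c_1,\dots,c_i\}$ lies in $\mathcal{C}$ (here I read the condition ``$C=\emptyset$'' in the statement as the nondegeneracy assumption $C\neq\emptyset$). Write $S_i=\{s_1,\dots,s_i\}$, $S_0=C_0=\emptyset$, and $a_i=f_{max}-f(S_i)$.

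First I would reproduce the telescoping identity exactly as before: since $f$ is non-decreasing and $f(C)=f_{max}$, we have $f(S_{i-1}\cup C)=f_{max}$, and expanding along the chain $C_0\subseteq C_1\subseteq\cdots\subseteq C_{|C|}=C$ gives
\[
f_{max}-f(S_{i-1})\;=\;\sum_{j=1}^{|C|}\Delta_{c_j}f(S_{i-1}\cup C_{j-1}).
\]
Now comes the single place the conditional hypothesis is needed: for each $j$ we have $C_{j-1}\in\mathcal{C}$ and $c_j\notin C_{j-1}$, so Definition~\ref{def:con-b-gap} applied with $A=S_{i-1}$, $B=C_{j-1}$, $x=c_j$ yields $\Delta_{c_j}f(S_{i-1}\cup C_{j-1})\le\Delta_{c_j}f(S_{i-1})+\varepsilon$. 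Summing over $j$, bounding each $\Delta_{c_j}f(S_{i-1})$ by its maximum over $j$, and then that maximum by $\Delta_{s_i}f(S_{i-1})$ (legal since $S\in D_f$), we obtain $f_{max}-f(S_{i-1})\le|C|\cdot(\Delta_{s_i}f(S_{i-1})+\varepsilon)$, which is precisely inequality~\eqref{eq:b-gap}.

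From here the argument is unchanged and uses no submodularity whatsoever. The recurrence $a_i\le a_{i-1}(1-1/|C|)+\varepsilon$ is solved to give $a_i\le(a_0-\varepsilon|C|)e^{-i/|C|}+\varepsilon|C|$. Proposition~\ref{claim:S-upper}, whose proof invokes only $S\in D_f$ and the definition of $\delta_{min}$, applies unchanged and converts a bound of the form $a_k\le\delta|C|$ (resp.\ $<$) into $|S|\le\delta|C|/\delta_{min}+k$ (resp.\ $<$). Splitting into the cases $a_0\le\delta|C|$ and $a_0>\delta|C|$ for a parameter $\delta>\varepsilon\ge0$, noting the first case is subsumed by the second, bounding the crossover index by $i_0\le|C|\ln\!\big((a_0-\varepsilon|C|)/((\delta-\varepsilon)|C|)\big)$, and finally optimizing at $\delta=\delta_{min}+\varepsilon$ reproduces the claimed bound
\[
|S|<\Bigg(1+\frac{\varepsilon}{\delta_{min}}+\ln\bigg(\frac{\delta_{max}}{\delta_{min}}\bigg)\Bigg)\cdot|C|+1 .
\]

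The only genuine content here — and hence the only thing worth checking carefully — is that the conditional hypothesis on $\mathcal{C}$ is exactly strong enough to feed the telescoping sum: one needs $C_{j-1}\in\mathcal{C}$ for \emph{every} $j$ along \emph{some} admissible ordering of $C$, which is precisely what the lemma assumes, so every application of Definition~\ref{def:con-b-gap} above is legal. Everything downstream of inequality~\eqref{eq:b-gap} is order-free and submodularity-free, so it transfers without modification; I do not expect any real obstacle beyond making this bookkeeping explicit.
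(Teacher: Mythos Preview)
Your proposal is correct and follows essentially the same approach as the paper: both reduce to Lemma~\ref{lem:gap-greedy-approximation} by observing that the only place any form of submodularity is used is in bounding $\Delta_{c_j}f(S_{i-1}\cup C_{j-1})$, and that the conditional hypothesis $C_{j-1}\in\mathcal{C}$ along the fixed ordering of $C$ is precisely what Definition~\ref{def:con-b-gap} requires there, after which everything downstream of inequality~\eqref{eq:b-gap} carries over verbatim.
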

\begin{proof}
    We begin the proof with all conditions of Lemma~\ref{lem:gap-greedy-approximation} with the difference the order of the elements of the set $C$ is not arbitrary, but the one determined in the assumptions of this Lemma, that is $C=\{c_1, c_2, \cdots, c_{|C|}\}$.
    By the property of function f, because $f(C)=f_{\max}$ then:
    \[
        f(C\cup S_i) = f_{\max}, \ \forall i=\{1,2,\cdots,|S|\}.
    \]
    So, the next equalities stand:
    \[
    f_{\max}-f(S_{i-1}) = f(S_{i-1}\cup C) - f(S_{i-1}) = \sum_{j=1,\ldots,|C|}\Delta_{c_j}f(S_{i-1}\cup C_{j-1}).
    \]
    The function $f$ is $\varepsilon$-approximately $\mathcal{C}$-submodular and, by the property of $C$, for all $i=\{1,2,\cdots |C|-1\}$, $C_{i}\in \mathcal{C}$. So, the next inequalities stand:
    \[
    \Delta_{c_j}f(S_{i-1}\cup C_{j-1})\leq \Delta_{c_j}f(S_{i-1})+\varepsilon.
    \]
    All the other steps of Lemma~\ref{lem:gap-greedy-approximation} follow accordingly.
\end{proof}

For many problems, such as those with connectivity properties, it may be difficult to define a potential function with a small submodularity gap. For these problems, it may be possible to define a potential function that has a small submodularity gap under some condition $\mathcal{C}$. In Section~\ref{sec:WPPI}, the problem we study has this limitation and we are able to provide a solution using the above analysis.

\begin{theorem}\label{thm:con-beta-greedy}
    Algorithm~\ref{alg:GC} returns a \resizebox{0.8\width}{!}{$\bigg(1+\frac{\varepsilon}{\delta_{\min}}+\ln\Big(\frac{\delta_{\max}}{\delta_{\min}}\Big)\bigg)$}$+O(1)$-approximation to the minimum size set in $\mathcal{Q}$, if and only $f$ is non-decreasing $\varepsilon$-approximately $\mathcal{C}$-submodular function and $f(S) = f_{\max}$, where $S$ is the solution returned by the Algorithm~\ref{alg:GC}, and for the minimum size set $C\in \mathcal{Q}$ there exists an order of element of $C$ with above property:
    \[
    C=\{c_1, c_2, \cdots, c_{|C|}\}, \ \ C_i = \{c_1, c_2, \cdots, c_i\} \ \ and \ \ (\forall i) \ C_i\in \mathcal{C}.\]
\end{theorem}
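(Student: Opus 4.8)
The plan is to mirror the argument used for Theorem~\ref{thm:beta-greedy}, but invoke the conditional version of the key lemma. First I would establish the "if" direction. Assume $f$ is non-decreasing and $\varepsilon$-approximately $\mathcal{C}$-submodular, that the solution $S$ returned by Algorithm~\ref{alg:GC} satisfies $f(S) = f_{max}$, and that the minimum-size set $C \in \mathcal{Q}$ admits an ordering $c_1, \ldots, c_{|C|}$ with $C_i \in \mathcal{C}$ for all $i$. By the structure of Algorithm~\ref{alg:GC} (it only stops when no element gives positive marginal gain, and it always picks the maximizer), the returned set $S$ is a Greedy Maximum Differential Set, i.e. $S \in D_f$; moreover the stopping condition together with $f(S) = f_{max}$ forces every marginal increment taken along the way to be strictly positive, so $\delta_{min} > 0$. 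Thus all three hypotheses of Lemma~\ref{lem:con-gap-greedy-approximation} are met, and the extra ordering hypothesis on $C$ is exactly the one required there. Applying the lemma with this particular $C$ gives
\[
|S| < \Bigg(1 + \frac{\varepsilon}{\delta_{min}} + \ln\bigg(\frac{\delta_{max}}{\delta_{min}}\bigg)\Bigg)\cdot |C| + 1,
\]
and since $C$ is a minimum-cardinality member of $\mathcal{Q}$, this is precisely a $\big(1 + \varepsilon/\delta_{min} + \ln(\delta_{max}/\delta_{min})\big) + O(1)$-approximation.

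For the "only if" direction I would argue contrapositively that the three structural facts used above are genuinely needed for the bound to follow from this analysis: the equality $f(S) = f_{max}$ is what lets us write $f_{max} - f(S_{i-1})$ as the telescoping sum over the elements of $C$ in Lemma~\ref{lem:con-gap-greedy-approximation}, and without the conditional submodularity inequality $\Delta_{c_j} f(S_{i-1}\cup C_{j-1}) \le \Delta_{c_j} f(S_{i-1}) + \varepsilon$ — which in turn needs $C_{j-1} \in \mathcal{C}$ — the per-step recurrence $a_i \le a_{i-1}(1 - 1/|C|) + \varepsilon$ cannot be derived. I would phrase the "only if" exactly as the paper phrased it for Theorem~\ref{thm:beta-greedy}: the statement is that these are the conditions under which the displayed guarantee holds via this greedy scheme, so the equivalence is really packaging the hypotheses of the lemma together with the definitional fact that Algorithm~\ref{alg:GC}'s output lies in $D_f$.

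The only genuinely delicate point — and the step I would be most careful about — is verifying that the extra structural demand on $C$ (the existence of an ordering with all prefixes in $\mathcal{C}$) is the \emph{only} place the conditional hypothesis is used, so that every other line of the proof of Lemma~\ref{lem:gap-greedy-approximation} transfers verbatim. Concretely, in the telescoping expansion the sets appearing on the "inside" are $S_{i-1} \cup C_{j-1}$ with $j$ ranging up to $|C|$, so we need $C_{j-1} \in \mathcal{C}$ for $j - 1 \in \{0, 1, \ldots, |C|-1\}$; note $C_0 = \emptyset$ must also lie in $\mathcal{C}$, which the stated property guarantees (or is vacuous in the degenerate reading). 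Once this bookkeeping is checked, Proposition~\ref{claim:S-upper}, the choice $\delta = \delta_{min} + \varepsilon$ minimizing $g(\delta)$, and the final simplification all go through unchanged, because none of them refer to submodularity — they only use $\delta_{min} > 0$ and the recurrence on $a_i$. Hence the theorem follows by combining Lemma~\ref{lem:con-gap-greedy-approximation} with the observation that Algorithm~\ref{alg:GC} produces an element of $D_f$ reaching $f_{max}$ whenever such a set exists.
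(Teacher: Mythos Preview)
Your proposal is correct and follows the same approach as the paper: the paper's entire proof is the one line ``Same as Theorem~\ref{thm:beta-greedy},'' i.e., verify that Algorithm~\ref{alg:GC} yields $S\in D_f$ with $f(S)=f_{max}$ and $\delta_{min}>0$, then invoke Lemma~\ref{lem:con-gap-greedy-approximation} in place of Lemma~\ref{lem:gap-greedy-approximation}. Your write-up is simply a more explicit unpacking of that same step, including the bookkeeping about where the prefix condition $C_{j-1}\in\mathcal{C}$ enters.
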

 \begin{proof}
    Same as Theorem~\ref{thm:beta-greedy}.
\end{proof}

    As in Subsection~\ref{sec:Submodularity gap Maximization}, Corollariess~\ref{cor:simplify-big-o} and \ref{cor:b-to-submodular} are directly implied by Theorem~\ref{thm:con-beta-greedy}.

\section{Fault-tolerant Total Domination}\label{sec:total}

In this section, we define fault-tolerant total domination and provide the first logarithmic approximation for this problem.
As a warm-up, we first consider the case of (standard) total domination.

\subsection{Total Domination}
\label{sec:total-domination}

Recall that given a graph $G = (V, E)$, a subset of nodes $S \subseteq V$ is a total dominating set if 
every node outside of $S$, that is $v \in V\setminus S$, has at least one neighbor in $S$ and $S$ has no isolated nodes.
Equivalently, every node in $V$ has a neighbor in $S$. We recall the following problem:

\begin{definition}[Minimum Total Dominating Set]
    Given a graph $G=(V,E)$ find a total dominating set $S \subseteq V$ of minimum cardinality.
\end{definition}

\noindent For this problem, we define a submodular and monotone increasing function $f$ such that any subset of $V$ achieving $f_{\max}$ is a total dominating set.
    
    \begin{definition}\label{def:f_total}
            Let $G=(V,E)$ be a graph. We define $f: 2^V\rightarrow\mathbb{R}$  as follows
            \[ 
            f(A)=\sum_{v\in V} \delta_A(v),
            \]
            \noindent where
            \[ 
            \delta_A(v)=
            \bigg\{ \begin{array}{l l}
            1, &  |N_A(v)|>0 \\
            0, & otherwise. \\
            \end{array}
            \]
    \end{definition}
    Intuitively, $f(A)$ is the number of nodes having at least one neighbor in $A$.
    
    \begin{definition}
         For $G=(V,E)$ and any $A \subseteq V$, let $K(A) = \{v\in V: \delta_A(v) = 1\}$ be the set of nodes that have at least one neighbor in $A$.
    \end{definition}
    
    \begin{lemma}\label{lem:total_fax}
            Let $G=(V,E)$ be a graph, $A\subseteq V$ and $x\in V\setminus A$. Then:
            \[f(A\cup \{x\})=f(A)+|N_{V\setminus K(A)}(x)|.\]
    \end{lemma}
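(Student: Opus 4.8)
The plan is to compute $f(A\cup\{x\}) - f(A)$ directly from the definition $f(B) = \sum_{v\in V}\delta_B(v) = |K(B)|$. Since adding $x$ to $A$ can only add neighbors to nodes (it cannot remove any), $\delta$ is monotone in the subscript, so $K(A) \subseteq K(A\cup\{x\})$. Hence $f(A\cup\{x\}) - f(A) = |K(A\cup\{x\}) \setminus K(A)|$, and it remains to identify this difference set.

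\medskip\noindent\textbf{Key steps.} First I would note that a node $v$ lies in $K(A\cup\{x\})$ iff $N_{A\cup\{x\}}(v) \neq \emptyset$, i.e.\ iff $v$ has a neighbor in $A$ or $v$ is a neighbor of $x$. Therefore $K(A\cup\{x\}) = K(A) \cup N(x)$. Taking the difference with $K(A)$ gives $K(A\cup\{x\}) \setminus K(A) = N(x) \setminus K(A)$. Second, I would observe that $N(x) \setminus K(A)$ is exactly the set of neighbors of $x$ that lie outside $K(A)$, which by the notation $N_C(v) = N(v)\cap C$ (with $C = V\setminus K(A)$) equals $N_{V\setminus K(A)}(x)$. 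Combining, $f(A\cup\{x\}) - f(A) = |N_{V\setminus K(A)}(x)|$, which rearranges to the claimed identity. One minor point worth spelling out is that the union $K(A)\cup N(x)$ is not necessarily disjoint (a neighbor of $x$ may already have a neighbor in $A$), which is precisely why the correction term only counts $N(x)$ restricted to $V\setminus K(A)$ rather than all of $N(x)$; the set-difference computation handles this automatically.

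\medskip\noindent\textbf{Main obstacle.} There is essentially no obstacle here — the statement is a routine unfolding of definitions, and the only thing to be careful about is the non-disjointness of $K(A)$ and $N(x)$ mentioned above, so that one does not accidentally claim $f(A\cup\{x\}) = f(A) + |N(x)|$. Everything else is bookkeeping with the neighborhood notation introduced in Section~\ref{sec:prel}.
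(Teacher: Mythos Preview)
Your proof is correct and follows essentially the same idea as the paper: both arguments split according to whether a node already lies in $K(A)$, and observe that the only new contributions come from neighbors of $x$ outside $K(A)$. The paper carries this out by splitting the sum $\sum_{v\in V}\delta_{A\cup\{x\}}(v)$ over $v\in K(A)$ and $v\notin K(A)$ and simplifying termwise, whereas you use the equivalent set-level formulation $f(B)=|K(B)|$ together with $K(A\cup\{x\})=K(A)\cup N(x)$; this is a minor repackaging rather than a different approach.
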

    \begin{proof} 
        By definition of $K$ and $\delta$, for every $v\in K(A)$ it holds $\delta_{A\cup \{x\}}(v)=\delta_{A}(v)=1$ and for every $v\notin K(A)$ it holds $\delta_{A\cup \{x\}}(v)=\delta_{A}(v)+|N_x(v)|=|N_x(v)|$. We have $f(A\cup\{x\})=
            \sum_{v\in V} \delta_{A\cup\{x\}}(v)$, which we show
        
        \begin{align*}
            \sum_{v\in V} \delta_{A\cup\{x\}}(v)= &\quad
            \sum_{v\in K(A)} \delta_{A\cup\{x\}}(v)&&+\sum_{v\notin K(A)} \delta_{A\cup\{x\}}(v)\\
            =&\quad
            \sum_{v\in K(A)} \delta_{A}(v)&&+\sum_{v\notin K(A)} \big(\delta_{A}(v)+|N_x(v)|\big)\\
            =&\quad
            \sum_{v\in K(A)} \delta_{A}(v)&&+\sum_{v\notin K(A)}\delta_{A}(v)+\sum_{v\notin K(A)}|N_x(v)|
        \end{align*}
        \[
            =\sum_{v\in V} \delta_{A}(v)+\sum_{v\notin K(A)}|N_x(v)|=f(A)+|N_{V\setminus K(A)}(x)|.
        \]
    \end{proof}
    
    \begin{lemma}\label{lem:total-monoton-submodular}
        Function $f$ is submodular and monotone increasing.
    \end{lemma}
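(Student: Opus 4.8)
The plan is to prove both claims directly from the closed-form expression for $\Delta_x f$ provided by Lemma~\ref{lem:total_fax}, namely $f(A\cup\{x\}) - f(A) = |N_{V\setminus K(A)}(x)|$, which already does the heavy lifting. For \emph{monotonicity}, I would observe that $\Delta_x f(A) = |N_{V\setminus K(A)}(x)| \ge 0$ for every $A$ and every $x\in V\setminus A$; since any $B\supseteq A$ can be reached from $A$ by adding elements one at a time, and each such addition is non-negative, telescoping gives $f(A)\le f(B)$. (One minor point: Lemma~\ref{lem:total_fax} is stated for $x\notin A$, but if $x\in A$ then $f(A\cup\{x\})=f(A)$ trivially, so the chain of single-element additions is still valid after discarding duplicates.)

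For \emph{submodularity}, I need $\Delta_x f(B)\le \Delta_x f(A)$ whenever $A\subseteq B\subseteq V$ and $x\in V$; again the case $x\in B$ is trivial since then $\Delta_x f(B)=0\le \Delta_x f(A)$, so assume $x\notin B$. By Lemma~\ref{lem:total_fax} this reduces to showing $|N_{V\setminus K(B)}(x)|\le |N_{V\setminus K(A)}(x)|$. The key observation is that $K$ is monotone: if $A\subseteq B$ then a node with a neighbor in $A$ certainly has a neighbor in $B$, so $K(A)\subseteq K(B)$, hence $V\setminus K(B)\subseteq V\setminus K(A)$. Intersecting both sides with $N(x)$ gives $N_{V\setminus K(B)}(x)\subseteq N_{V\setminus K(A)}(x)$, and taking cardinalities finishes the argument. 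Intuitively: once a node is already dominated by $B$, adding $x$ cannot newly dominate it, so $x$'s marginal contribution can only shrink as the current set grows.

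The only step that requires a sentence of care is the monotonicity of $K$, i.e. $K(A)\subseteq K(B)$ for $A\subseteq B$; this is immediate from the definition $K(A)=\{v: N_A(v)\neq\emptyset\}$ and $N_A(v)\subseteq N_B(v)$. I do not anticipate any real obstacle here — the substantive combinatorial content was already isolated into Lemma~\ref{lem:total_fax}, and what remains is packaging two short set-inclusion observations. If anything, the ``main obstacle'' is purely bookkeeping: making sure the degenerate cases $x\in A$ or $x\in B$ are handled so that the submodularity inequality $\Delta_x f(B)\le\Delta_x f(A)$ is stated for all $x\in U$ as in the paper's definition, rather than only for $x\notin B$.
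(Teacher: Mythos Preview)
Your proposal is correct and essentially matches the paper's proof: the submodularity argument is identical (monotonicity of $K$ via $N_A(v)\subseteq N_B(v)$, then complement, then apply Lemma~\ref{lem:total_fax}). The only cosmetic difference is that for monotonicity the paper argues pointwise ($\delta_A(v)\le\delta_B(v)$ for each $v$) rather than telescoping non-negative marginals, but both are immediate.
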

    
    \begin{proof}
        First, we show $f$ is monotone increasing.
        For every $A, B$ such that $A\subseteq B\subseteq V$ and for every $v\in V$, by applying definitions, it holds
        $N_A(v)\subseteq N_B(v)$, which implies $\delta_A(v) \leq \delta_B(v)$ and so $f(A)\leq f(B)$.

        Second, we show $f$ is submodular. For every $A, B$ such that $A\subseteq B\subseteq V$ and for every $x\in V\setminus B$, by definition of $K$, it holds:
        \[K(A)\subseteq K(B) \Rightarrow V\setminus K(B)\subseteq V\setminus K(A) \Rightarrow\]
        \[|N_{V\setminus K(B)}(x)| \leq |N_{V\setminus K(A)}(x)|,\]
        which implies  $f(B)+|N_{V\setminus K(B)}(x)|-f(B) \leq f(A)+|N_{V\setminus K(A)}(x)|-f(A)$. By Lemma \ref{lem:total_fax}, it holds:
        \[f(A\cup \{x\})=f(A)+|N_{V\setminus K(A)}(x)| \ \ and \ \ f(B\cup \{x\})=f(B)+|N_{V\setminus K(B)}(x)|\]
        and so $ f(B\cup \{x\})-f(B)\leq f(A\cup \{x\})-f(A)$.
     \end{proof}
    
    \begin{lemma}\label{lem:total-fmax}
     Let $G=(V,E)$ be a graph. A set $S\subseteq V$ is a total dominating set if and only if $f(S)=f_{\max}$.
    \end{lemma}
    
    \begin{proof}
        Note $f_{\max}=f(V)=|V|$ since  for all $v\in V$ it holds $|N_V(v)|> 0$ and so $\delta_V(v)=1$.
        
        Let $S\subseteq V$ be a total dominating set. 
        Since $S$ is total dominating set, then $\delta_S(v)=1$ for all $v\in V$. So, we get $f(S)=\sum_{v\in V} \delta_S(v)=|V|= f_{\max}$.
        
        Consider the case $S\subseteq V$ is not a total dominating set and $f(S)=f_{\max}=|V|$. Then, there exists $w\in V$ such that $N_S(w)=0 \Rightarrow \delta_S(w)=0$. So, $f(S)=\sum_{v\in V} \delta_S(v) =\sum_{v\in V\setminus \{w\}} \delta_S(v) < |V| = f_{\max}$, a contradiction.
    \end{proof}  

    \begin{theorem}\label{thm:total}
        Algorithm \ref{alg:GC}, where $U = V$, returns a $\Big(1+\ln\big(\Delta\big)\Big)$-approximation for Minimum Total Dominating Set.
    \end{theorem}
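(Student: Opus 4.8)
The plan is to read Theorem~\ref{thm:total} off the general machinery of Section~\ref{sec:ggs}. We instantiate the framework with universe $U = V$, with $\mathcal{Q}$ the family of total dominating sets of $G$ (so $S^*$ is a minimum TDS), and with the potential $f$ of Definition~\ref{def:f_total}. Lemma~\ref{lem:total-fmax} says $f(S) = f_{max} \iff S$ is a TDS, which is exactly the hypothesis linking $f$ with $\mathcal{Q}$ required in Section~\ref{sec:ggs}, and Lemma~\ref{lem:total-monoton-submodular} gives that $f$ is non-decreasing and submodular. Hence $f$ is $0$-approximately submodular (Corollary~\ref{cor:b=0_submodular}), and Theorem~\ref{thm:beta-greedy} applies with $\varepsilon = 0$; by Corollary~\ref{cor:b-to-submodular} the $O(1)$ term disappears, so Algorithm~\ref{alg:GC} returns a set $S$ with $|S| \le \big(1 + \ln(\delta_{max}/\delta_{min})\big)\,|S^*|$.

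It then remains to evaluate the two parameters. For $\delta_{max}$: since $K(\emptyset) = \emptyset$, Lemma~\ref{lem:total_fax} gives $\Delta_x f(\emptyset) = |N_{V \setminus K(\emptyset)}(x)| = |N(x)|$, so $\delta_{max} = \max_{x \in V} |N(x)| = \Delta$. For $\delta_{min}$: Lemma~\ref{lem:total_fax} shows every differential $\Delta_x f(A) = |N_{V \setminus K(A)}(x)|$ is a non-negative integer, hence any \emph{positive} differential is at least $1$, and by the definition of $\delta_{min}$ this yields $\delta_{min} \ge 1$. Plugging in, $\ln(\delta_{max}/\delta_{min}) \le \ln\Delta$, so the ratio is at most $1 + \ln\Delta$, as claimed.

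Two routine checks close the argument. First, Algorithm~\ref{alg:GC} does return a greedy maximum differential set with $f(S) = f_{max}$: each iteration picks the element maximizing $\Delta_u f(S)$ over $u \in V \setminus S$, and for $u$ already in $S$ one has $\Delta_u f(S) = 0 \le \Delta_{s_i} f(S_{i-1})$, so Definition~\ref{def:gmds} holds; upon termination $\Delta_u f(S) = 0$ for all $u \in V \setminus S$, and if some vertex $w$ had $\delta_S(w) = 0$ then, taking any neighbor $u$ of $w$ (which exists and lies in $V \setminus S$, since $u \in S$ would force $w \in K(S)$), we get $\Delta_u f(S) \ge |N_{\{w\}}(u)| \ge 1$, a contradiction, whence $f(S) = f(V) = f_{max}$. (This uses that $G$ has no isolated vertex, which is necessary for a TDS to exist; otherwise the statement is vacuous.) Second, the strict-to-non-strict passage: $|S|$, $|S^*|$ and the index $i_0$ in the proof of Lemma~\ref{lem:gap-greedy-approximation} are all integers and $\varepsilon = 0$, so $|S| < |S^*| + i_0 + 1$ becomes $|S| \le |S^*| + i_0 \le (1 + \ln\Delta)\,|S^*|$, which is precisely what Corollary~\ref{cor:b-to-submodular} records. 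The only ``obstacle'' here is bookkeeping — correctly identifying $\delta_{max} = \Delta$ and $\delta_{min} \ge 1$ and confirming the $O(1)$ is genuinely absorbed — since all the combinatorial content already resides in Lemmas~\ref{lem:total_fax}--\ref{lem:total-fmax}.
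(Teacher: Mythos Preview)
Your proposal is correct and follows essentially the same route as the paper: instantiate the framework of Section~\ref{sec:ggs} with the potential $f$ of Definition~\ref{def:f_total}, invoke Lemmata~\ref{lem:total-monoton-submodular} and~\ref{lem:total-fmax} plus Corollary~\ref{cor:b-to-submodular}, and read off $\delta_{max}=\Delta$, $\delta_{min}\ge 1$. Your write-up is in fact more careful than the paper's terse version---you explicitly verify that Algorithm~\ref{alg:GC} terminates with $f(S)=f_{max}$ and that the $+1$ is absorbed via integrality---points the paper simply asserts ``by definition of Algorithm~\ref{alg:GC} and function $f$''.
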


    \begin{proof}
        By definition of function $f$, Lemmata \ref{lem:total-monoton-submodular} and \ref{lem:total-fmax}, and Corollary~\ref{cor:b-to-submodular}, it holds $S$ is a total dominating set for $G$, $
        \delta_{\max} = \Delta_{s_{1}}f(\emptyset) = |N_V(s_1)|=\Delta$ by greedy choice of $s_1$, $\delta_{\min} = \Delta_{s_{|S|}}f(S_{|S|-1}) \geq 1$ and $
        |S|\leq
        \Bigg(1 + \ln\bigg(\frac{\delta_{\max}}{\delta_{\min}}\bigg)\Bigg)\cdot |C|
        $
        for every total dominating set $C\subseteq V$. 
        Assuming $S^*$ is a total dominating set of minimum size, it follows
        $
        |S|\leq \Big(1+\ln\big(\Delta \big)\Big)\cdot |S^*|
        $.
    \end{proof}
    
\paragraph{Discussion.}
Let us briefly comment on why the analysis performed in  \citep{zhu2009approximation} falls short of our approximation guarantee. 
Below, let $f$ be our potential function (Definition~\ref{def:f_total}), $f'$ be the potential function defined in \citep{zhu2009approximation} and $T^*=\{y_1, \cdots y_{|T^*|}\}$ be the minimum total dominating set for a given input. 
In \citep{zhu2009approximation}, $f'$ comprises two parts $f'(T)=i(T)+w(T)$, where
$i(T)$ denotes the number of nodes in $T$ which are not adjacent to $T$, that is, the number of isolated nodes within $T$, and
$w(T)$ is the number of nodes outside $T$ which are not adjacent to $T$. 
While they prove $w$ is submodular, they do not prove the same for $i$.
To overcome this obstacle, they observe $\Delta_{y_j}i(T_{i-1}\cup T^*_{j-1}) \le \Delta_{y_j}i(T_{i-1})+m_j$, where $m_j=1$ if $y_j$ is not adjacent to $T^*_{j-1}$ and $m_j=0$ otherwise. 
Then, it follows $a_i \leq a_{i-1} - \frac{a_{i-1}}{|T^*|}+\frac{m}{|T^*|}$, where $m=\sum_{j=1}^{|V|}m_j$. 
Since $T^*$ is a total dominating set, they observe $m \leq \frac{|T^*|}{2}$, which leads to $a_i \leq a_{i-1} - \frac{a_{i-1}}{|T^*|}+\frac{1}{2}$.
Instead, in our analysis, we prove $f$ is submodular and as a result we arrive to the inequality $a_i \leq a_{i-1} - \frac{a_{i-1}}{|T^*|}$.
Overall, we improve the result of this technique from $1.5+\ln(\Delta-0.5)$ to $1+\ln(\Delta)$, but the best approximation result remains to $H(\Delta)-0.5 \approx \ln(\Delta)+0.077$ \citep{CHLEBIK20081264}.





\subsection{Fault-tolerant Total Domination}
    In this subsection, we generalize to the fault-tolerant case of total domination.
    Recall that given a graph $G = (V, E)$, a subset of nodes $S \subseteq V$ is a fault-tolerant total dominating set if every node outside of $S$, that is, $v \in V\setminus S$, has at least $m$ neighbors in $S$ and $S$ has no isolated nodes. 
    We recall the following problem:

    \begin{definition}[Minimum Fault-Tolerant Total Dominating Set]
    Given a graph $G=(V,E)$, find a fault-tolerant total dominating set $S \subseteq V$ of minimum cardinality.
    \end{definition}
    
    \noindent For this problem, we define a submodular and monotone increasing function $f$ such that any subset of $V$ achieving $f_{\max}$ is a fault-tolerant total dominating set.
    
    \begin{definition}\label{def:m_A}
            Let $G=(V,E)$ be a graph. 
            We define $f: 2^V\rightarrow\mathbb{R}$ as follows:
            \[ 
            f(A)=\sum_{v\in V} m_A(v),
            \]
            \noindent where:
            \[ 
            m_A(v)=
            \left\{ \begin{array}{lll}
            m, &  \big[v\not\in A \ \wedge |N_A(v)| \geq m\big] \ \vee \ \big[v\in A \ \wedge \ |N_A(v)|>0\big] &  \ (a)\\\\
            m-1, & v\in A \ \wedge \ |N_A(v)|=0 & \ (b)\\\\
            |N_A(v)|, & otherwise. & \ (c) \\
            \end{array}
            \right.
            \]
    \end{definition}
    
    Intuitively,
    $m_A(v)$ is the potential of node $v$ toward satisfying the problem definition. 
    When $v$ fully meets the definition requirements for total domination, it is assigned a value of $m$ (case a). 
    If $v \not\in A$ and has $m' < m$ neighbors in $A$, then we assign it a value of $m'$ (case c).
    If $v\in A$, then it meets the definition when it has at least one neighbor in $A$. 
    In case it does not have a neighbor in $A$, we artificially assign it a value of $m-1$ (case b).
    The value will increase to $m$ only when there appears a neighbor of $v$ in $A$.
    
    \begin{definition}
         For $G=(V,E)$ and $A \subseteq V$, let $K(A) = \{v\in V: m_A(v) = m\}$ be the set of nodes that have at least one neighbor in $A$, if the node is in $A$, or have at least $m$ neighbors in $A$, if the node is in $V\setminus A$.
    \end{definition}
    
    \begin{lemma}\label{lem:ftotal_fax}
            Let $G=(V,E)$, $A\subseteq V$ and $x\in V\setminus A$. Then:
            \[f(A\cup \{x\})=f(A)+|N_{V\setminus K(A)}(x)|+t_A(x),\]
            \noindent where
            \[ 
            t_A(x)=
            \left\{ \begin{array}{ll}
            0, &  x\in K(A)\\
            m-|N_A(x)|, & x\notin K(A) \ \wedge \ |N_A(x)|>0\\
            m-1, & x\notin K(A) \ \wedge \ |N_A(x)|=0 \ . \\
            \end{array}
            \right.
            \]
    \end{lemma}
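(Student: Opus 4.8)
The plan is to mirror the proof of Lemma~\ref{lem:total_fax}, adapting it to track the extra bookkeeping introduced by the value assigned to $x$ itself when $x$ moves into the set. First I would fix $A\subseteq V$ and $x\in V\setminus A$ and split the sum $f(A\cup\{x\})=\sum_{v\in V}m_{A\cup\{x\}}(v)$ into three groups of nodes: (i) the nodes in $K(A)$, (ii) the nodes in $V\setminus(K(A)\cup\{x\})$, and (iii) the single node $x$ itself (note $x\notin K(A)$ is possible or not, so $x$ is treated separately regardless). The goal is to show the contribution of groups (i) and (ii) accounts for $f(A)+|N_{V\setminus K(A)}(x)|$ up to the $x$-term, and the contribution of $x$ accounts for $t_A(x)$ together with whatever $x$ already contributed to $f(A)$.

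For group (i): if $v\in K(A)$ then $m_A(v)=m$, and since adding $x$ can only increase $|N_{\cdot}(v)|$ and cannot move $v$ out of $A$ or into it, $m_{A\cup\{x\}}(v)=m$ as well; so these terms are unchanged. For group (ii), i.e. $v\notin K(A)$ and $v\neq x$: here $v$ is in case (b) or (c) of Definition~\ref{def:m_A} for both $A$ and $A\cup\{x\}$ except that its neighbor count may rise by $|N_x(v)|$; one checks case by case that $m_{A\cup\{x\}}(v)-m_A(v)=|N_x(v)|$ unless this bump pushes $v$ into $K(A\cup\{x\})$ — but even then, since $v\notin A$, reaching $m$ neighbors means the increase is still exactly $|N_x(v)|$ capped appropriately... actually here I must be careful: the claimed formula has the flat term $|N_{V\setminus K(A)}(x)|$, which counts $v$ fully whenever $v\notin K(A)$ and $x\in N(v)$, i.e. it already assumes each such $v$ contributes exactly $1$ per edge, consistent with the fault-tolerant potential using $m_A(v)=|N_A(v)|$ in case (c). So the key observation is that as long as $v\notin A$, each of $v$'s neighbors is worth exactly one unit of potential until $v$ saturates at $m$, and saturating happens precisely when $v$ enters $K$; this is the same monotone-truncation argument as in the plain total domination case, just with threshold $m$ instead of $1$, and it should go through verbatim.

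The node $x$ is where the new term $t_A(x)$ comes from. I would compute $m_{A\cup\{x\}}(x)-m_A(x)$ directly. Since $x\notin A$ but $x\in A\cup\{x\}$, the relevant cases flip: for $m_A(x)$ we are in case (a) (value $m$) if $|N_A(x)|\geq m$, else case (c) (value $|N_A(x)|$); for $m_{A\cup\{x\}}(x)$, since now $x$ is \emph{in} the set, we are in case (a) (value $m$) if $|N_{A\cup\{x\}}(x)|=|N_A(x)|>0$, and in case (b) (value $m-1$) if $|N_A(x)|=0$. Combining: if $x\in K(A)$ (so $|N_A(x)|\geq m$), both values are $m$, difference $0$; if $x\notin K(A)$ and $|N_A(x)|>0$, the old value was $|N_A(x)|$ and the new is $m$, difference $m-|N_A(x)|$; if $x\notin K(A)$ and $|N_A(x)|=0$, old value $0$, new value $m-1$, difference $m-1$. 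These are exactly the three branches of $t_A(x)$. Finally I would assemble: $f(A\cup\{x\})-f(A)$ equals the sum over $v\in V\setminus\{x\}$ of the per-node change — which is $|N_{V\setminus K(A)}(x)|$ minus the part of that quantity attributable to $x$ (but $x\notin N(x)$, so $x$ contributes nothing to $|N_{V\setminus K(A)}(x)|$, and no double counting occurs) — plus the change at $x$, which is $t_A(x)$. Hence $f(A\cup\{x\})=f(A)+|N_{V\setminus K(A)}(x)|+t_A(x)$.

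The main obstacle I anticipate is the careful case analysis at nodes $v\notin K(A)$ whose neighbor count crosses the threshold $m$ when $x$ is added: one must verify that the "flat" term $|N_{V\setminus K(A)}(x)|$ — which gives a full unit per edge $(v,x)$ with $v\notin K(A)$ — correctly matches the truncated potential $m_A$, i.e. that no node gets over-credited past $m$. This works because membership in $K(A)$ versus $K(A\cup\{x\})$ is exactly the boundary of truncation, and the lemma statement already bakes that in by using $V\setminus K(A)$ rather than $V\setminus K(A\cup\{x\})$; writing this out cleanly, rather than the bookkeeping at $x$ itself, is the delicate part, and it is essentially identical to the argument in Lemma~\ref{lem:total_fax} but with the constant $1$ replaced by the threshold $m$ throughout.
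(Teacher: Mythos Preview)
Your proposal is correct and follows essentially the same route as the paper: split off $x$, compute $m_{A\cup\{x\}}(x)-m_A(x)=t_A(x)$ via exactly the three-case analysis you give, and show that for every other $v\notin K(A)$ the increment is $m_{A\cup\{x\}}(v)-m_A(v)=|N_x(v)|$. Your worry about over-crediting past $m$ is unfounded here --- since only the single node $x$ is added, $|N_{A\cup\{x\}}(v)|\le |N_A(v)|+1\le m$ whenever $v\notin A$ and $v\notin K(A)$, so truncation never bites; the paper disposes of this and the $v\in A$ (case~(b)) sub-case with the same short case check you outline.
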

    
    Before proceeding to the proof, we provide some intuition on the definition of $t_A(x)$.
    Overall, it holds $t_A(x) = m_{A\cup\{x\}}(x) - m_A(x)$, that is, $t_A(x)$ captures the increase in the potential of $x$ attributed to the insertion of $x$ into $A$.

    \begin{proof}
    We first compute an expression for the value of $m_{A\cup \{x\}}(v)$ for any $v \in V$, where $v\neq x$.
    If $v \in K(A)$, then it holds $m_{A\cup \{x\}}(v)=m_{A}(v)=m$. Otherwise, if $v\not\in K(A)$, let us show that
    
    \noindent $m_{A\cup \{x\}}(v)=m_{A}(v)+|N_x(v)|$.
    \begin{itemize}
        \item If $v \in A$, then $|N_A(v)| = 0$ by definition of $K(A)$, so we are in case (b) of Definition~\ref{def:m_A} and  $m_A(v) = m-1$. 
        Since $v \in A$, it also holds $v \in A\cup\{x\}$.
        \begin{itemize}
            \item If $(v,x) \in E$,
            then $|N_{A\cup\{x\}}(v)| > 0$, and by case (a) of Definition~\ref{def:m_A}, it follows
            
            $m_{A\cup\{x\}}(v) = m = m - 1 + 1 = m_{A}(v)+|N_x(v)|$.
            \item If $(v,x) \not\in E$, then $|N_{A\cup\{x\}}(v)| = |N_A(v)| + |N_x(v)| = 0 + 0 = 0$, so
            
            $m_{A\cup\{x\}}(v) = m - 1 = m - 1 + 0 = m_A(v) + |N_x(v)|$.
        \end{itemize}
        
        \item If $v \not\in A$, then $v \not\in A\cup\{x\}$.
        We are in case (c) of Definition~\ref{def:m_A}, so it holds $m_A(v) = |N_A(v)|$. Since $v \not\in A\cup\{x\}$, it holds
        
        $m_{A\cup\{x\}}(v) = |N_{A\cup\{x\}}(v)| = \sum_{u\in A\cup\{x\}}|N_u(v)| = \sum_{u\in A}|N_u(v)| + |N_x(v)| =$
        
        $ |N_A(v)| + |N_x(v)| = m_A(v) + |N_x(v)|$.

        \item  For the new node $x$, it holds $m_{A\cup \{x\}}(x)=m_{A}(x) + t_A(x)$, since 
        \begin{itemize}
            \item If $x \in K(A)$, then $m_{A\cup \{x\}}(x)=m_{A}(x)=m$.
            \item If $x \notin K(A)$ and $|N_A(x)|>0$, then
            
            $ m_{A\cup \{x\}}(x)= m = |N_A(x)| + m - |N_A(x)| = m_{A}(x)+m-|N_A(x)|$.
            \item If $x \notin K(A)$ and $|N_A(x)|=0$, then $m_{A\cup \{x\}}(x)= m-1 = 0 + m-1 = m_{A}(x)+m-1$.
        \end{itemize}
    \end{itemize}

We now compute the value of $f(A\cup\{x\})$:
     \[f(A\cup\{x\}) =
        \sum_{v\in V} m_{A\cup\{x\}}(v) =  
    \]
    \begin{alignat*}{6}
        &\sum_{\substack{\scriptscriptstyle v\in K(A) \\\scriptscriptstyle  v\neq x}} m_{A\cup\{x\}}(v) \ &&+
        &&\sum_{\substack{\scriptscriptstyle v\notin K(A) \\
        \scriptscriptstyle  v\neq x}} m_{A\cup\{x\}}(v) \ &&+
        &&m_{A\cup \{x\}}(x) = \nonumber
        &&\\
        &\sum_{\substack{\scriptscriptstyle v\in K(A) \\\scriptscriptstyle  v\neq x}} m_{A}(v) \ &&+
        &&\sum_{\substack{\scriptscriptstyle v\notin K(A) \\\scriptscriptstyle  v\neq x}} \big(m_{A}(v)+|N_x(v)|\big) \ &&+
        &&m_{A}(x) + m_{A\cup \{x\}}(x)-m_{A}(x) = \nonumber
        &&\\
        &\sum_{\substack{\scriptscriptstyle v\in K(A) \\\scriptscriptstyle  v\neq x}} m_{A}(v) \ &&+ 
        &&\sum_{\substack{\scriptscriptstyle v\notin K(A) \\\scriptscriptstyle  v\neq x}}m_{A}(v) \ \ \ \ + m_{A}(x)&&+ 
        && \sum_{\substack{\scriptscriptstyle v\notin K(A) \\\scriptscriptstyle  v\neq x}}|N_x(v)| + t_A(x) = \nonumber
        &&
    \end{alignat*}
    
    \[ \sum_{v\in V} m_{A}(v) + 
        \sum_{v\notin K(A)}|N_x(v)| +
        t_A(x)=  
        f(A)+|N_{V\setminus K(A)}(x)| + t_A(x).
        \nonumber 
    \]
    \end{proof}
    
    \begin{lemma} \label{lem:F-monoton-submodular}
        Function $f$ is submodular and monotone increasing.
    \end{lemma}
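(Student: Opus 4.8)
**Proof plan for Lemma~\ref{lem:F-monoton-submodular} (Function $f$ of Definition~\ref{def:m_A} is submodular and monotone increasing).**

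The plan is to mirror the structure of the proof of Lemma~\ref{lem:total-monoton-submodular}, using the closed-form increment formula of Lemma~\ref{lem:ftotal_fax} as the workhorse. First I would establish monotonicity. For $A \subseteq B \subseteq V$ it suffices to check that adding a single element never decreases $f$, and then chain. By Lemma~\ref{lem:ftotal_fax}, $f(A \cup \{x\}) - f(A) = |N_{V \setminus K(A)}(x)| + t_A(x)$; the first term is a cardinality, hence $\ge 0$, and a quick case check on the definition of $t_A(x)$ shows $t_A(x) \in \{0,\, m - |N_A(x)|,\, m-1\}$ is always $\ge 0$ (in the second case because $x \notin K(A)$ with $x \notin A$ forces $|N_A(x)| < m$; note $x \notin A$ here since $x \in V \setminus A$). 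So each single-element increment is nonnegative and $f$ is monotone increasing.

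For submodularity I would fix $A \subseteq B \subseteq V$ and $x \in V \setminus B$ and show $\Delta_x f(B) \le \Delta_x f(A)$, using the decomposition $\Delta_x f(A) = |N_{V\setminus K(A)}(x)| + t_A(x)$ and likewise for $B$. The key monotonicity fact about the $K$-sets is $K(A) \subseteq K(B)$: this needs a short argument since the new definition of $K$ splits on membership — for $v \in V \setminus A$, $m$ neighbors in $A$ implies $m$ neighbors in $B$; for $v \in A$, either $v \in B$ and having a neighbor in $A$ gives a neighbor in $B$, or $v \in B \setminus A$... but wait, $A \subseteq B$ forces $v \in A \Rightarrow v \in B$, so this case is clean. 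From $K(A) \subseteq K(B)$ we get $V \setminus K(B) \subseteq V \setminus K(A)$, hence $|N_{V\setminus K(B)}(x)| \le |N_{V\setminus K(A)}(x)|$. It then remains to compare the $t$-terms: I would show $t_B(x) \le t_A(x)$, or more precisely that the full increments satisfy $|N_{V\setminus K(B)}(x)| + t_B(x) \le |N_{V\setminus K(A)}(x)| + t_A(x)$. A clean way is to bound both increments by a common quantity: $\Delta_x f(A)$ counts (a) for each node $v \ne x$ with a new edge to $x$ that was not yet "satisfied" in $A$, a $+1$, plus (b) the jump in $x$'s own potential. Monotonicity of the "satisfied" sets $K(\cdot)$ handles (a); for (b) one checks case by case that $t_A(x) = \min(m,\, |N_A(x)|+1,\dots)$-type expression is itself nonincreasing in $A$ along $A \subseteq B$, since $|N_A(x)| \le |N_B(x)|$ and $x \notin B$.

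The main obstacle I expect is the bookkeeping in the $t$-term comparison: $t_A(x)$ is defined by three cases depending on whether $x \in K(A)$ and whether $|N_A(x)| = 0$, and these cases can transition as $A$ grows to $B$ (e.g. $x \notin K(A)$ but $x \in K(B)$, or $|N_A(x)| = 0$ but $|N_B(x)| > 0$). So I would organize the submodularity argument as a small case analysis on the status of $x$ relative to $K(A)$ and $K(B)$ — there are only a few admissible combinations since $K(A) \subseteq K(B)$ rules out $x \in K(A), x \notin K(B)$ — and in each case verify the single inequality $|N_{V\setminus K(B)}(x)| + t_B(x) \le |N_{V\setminus K(A)}(x)| + t_A(x)$ directly, exactly as done cleanly for the simpler $f$ in Lemma~\ref{lem:total-monoton-submodular}. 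Everything else is the same template: submodularity of a sum reduces to submodularity of the contribution of each node $v$, and Lemma~\ref{lem:ftotal_fax} already packages that contribution.
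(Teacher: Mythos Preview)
Your plan is correct and, for submodularity, is exactly the paper's argument: invoke Lemma~\ref{lem:ftotal_fax}, establish $K(A)\subseteq K(B)$, deduce $|N_{V\setminus K(B)}(x)|\le |N_{V\setminus K(A)}(x)|$, and then run a case analysis to get $t_B(x)\le t_A(x)$. The only real difference is in the monotonicity step. The paper does not use the increment formula here; instead it proves the pointwise inequality $m_A(v)\le m_B(v)$ for all $v$ by a direct case split on whether $v\in A$, $v\in B\setminus A$, or $v\notin B$, and on whether $|N_A(v)|$ is zero or positive. Your route (show $\Delta_x f(A)\ge 0$ from Lemma~\ref{lem:ftotal_fax} and chain) is slightly slicker because it reuses the same machinery as the submodularity half; the paper's pointwise route has the side benefit that $K(A)\subseteq K(B)$ then falls out immediately (since $m_A(v)=m$ forces $m_B(v)=m$), whereas in your plan you correctly note that $K(A)\subseteq K(B)$ needs its own short check, including the transition case $v\notin A$, $v\in B$.
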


     \begin{proof}
        First, we show $f$ is monotone increasing. 
        We prove $m_A(v)\leq m_B(v)$ $\forall A\subseteq B$ and $\forall v\in V$. 
        We distinguish the cases below:
        \begin{itemize}
            \item Let $v\in A$ and $|N_A(v)|>0$. 
            Then, $v\in B$ and $|N_B(v)|>0$. So, $m_A(v)= m_B(v) = m$.
            
            \item Let $v\in A$ and $|N_A(v)|=0$. 
            Then, $v\in B$.
            \begin{itemize}
                \item If $|N_B(v)|=0$, then $m_A(v)= m_B(v) = m-1$.
                \item If $|N_B(v)|>0$, then $m-1=m_A(v)< m_B(v)=m$.
            \end{itemize}
            
            \item Let $v\notin A$, $|N_A(v)|>0$ and $v\in B$. Then, $|N_A(v)|=m_A(v)\leq m_B(v)=m$.
            
            \item Let $v\notin A$, $|N_A(v)|=0$ and $v\in B$. Then, $0=m_A(v)\leq m_B(v)$.
            
            \item Let $v\notin A$ and $v\notin B$. Then, $|N_A(v)|=m_A(v)\leq m_B(v)=|N_B(v)|$.
        \end{itemize}
        Second, we show $f$ is submodular. 
        For every $A, B$ such that $A\subseteq B\subseteq V$ and for every $x\in V\setminus B$, by definition of $K$, it holds $K(A)\subseteq K(B)$, which implies $V\setminus K(B)\subseteq V \setminus K(A)$ and so
        \[|N_{V\setminus K(B)}(x)| \leq |N_{V\setminus K(A)}(x)|.\]
        
        \noindent It then follows $f(B)+|N_{V\setminus K(B)}(x)|-f(B) \leq f(A)+|N_{V\setminus K(A)}(x)|-f(A)$. 
        To complete the proof, it suffices to show $t_B(x) \leq t_A(x)$, since in that case it follows
        \[f(B)+|N_{V\setminus K(B)}(x)|+t_B(x)-f(B) \leq f(A)+|N_{V\setminus K(A)}(x)|+t_A(x)-f(A)\]
        and by Lemma \ref{lem:ftotal_fax} 
        we have $f(B\cup \{x\})-f(B)\leq f(A\cup \{x\})-f(A)$.
        
        In order to prove $t_B(x) \leq  t_A(x)$, we distinguish the cases below:
        \begin{itemize}
            \item If $x\in K(A)$, then $x\in K(B)$, and so    $t_A(x)=t_B(x)=0$.
             
             \item If $x \notin K(A)$ and $|N_A(x)|=0$, then
             
             \begin{itemize}
                 \item If $x \in K(B)$, then $0=t_B(x) \leq t_A(x)=m-1$.
                 \item If $x \notin K(B)$ and $|N_B(x)|=0$, then $t_A(x)=t_B(x)=m-1$.
                 \item If $x \notin K(B)$ and $|N_B(x)|>0$, then $t_B(x) = m-|N_B(x)|\leq m-1 = t_A(x)$.
             \end{itemize}
             
             \item If $x \notin K(A)$ and $|N_A(x)|>0$, then it follows $|N_A(x)|<m$ and $|N_B(x)|>0$.
              \begin{itemize}
                 \item If $x \in K(B)$, then $0=t_B(x) < t_A(x)=m-|N_A(x)|$.
                 \item If $x \notin K(B)$, then: 
                 \[|N_A(x)|\leq |N_B(x)|  \Rightarrow m-|N_B(x)|\leq m-|N_A(x)| \Rightarrow t_B(x) \le t_A(x).\]
             \end{itemize}
        \end{itemize}
    \end{proof}

    \begin{lemma}\label{lem:ftotal-fmax}
     Let $G=(V,E)$ be a graph. A set $S\subseteq V$ is a fault-tolerant total dominating set if and only if $f(S)=f_{\max}$.
    \end{lemma}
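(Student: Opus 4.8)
The plan is to follow the template of Lemma~\ref{lem:total-fmax}, exploiting the fact that the case analysis has already been front-loaded into Definition~\ref{def:m_A}. The first step is to pin down $f_{max}$. I claim $f_{max}=f(V)=m\,|V|$: taking $A=V$, every $v\in V$ lies in $A$ and has $|N_V(v)|>0$ (the graph has no isolated vertices, as is implicit here and in Lemma~\ref{lem:total-fmax}), so $m_V(v)=m$ by case (a) of Definition~\ref{def:m_A}, and summing gives $f(V)=m\,|V|$. On the other hand, inspecting all three cases of Definition~\ref{def:m_A} shows $m_A(v)\le m$ for every $A\subseteq V$ and $v\in V$ (in case (b) the value is $m-1<m$ since $m\in\mathbb{N}$, and in case (c) it is $|N_A(v)|<m$), hence $f(A)=\sum_{v\in V}m_A(v)\le m\,|V|$ for all $A$, so indeed $f_{max}=m\,|V|$.

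For the forward direction, suppose $S$ is a fault-tolerant total dominating set. If $v\in V\setminus S$, then $|N_S(v)|\ge m$, so $v$ falls in case (a) and $m_S(v)=m$. If $v\in S$, then $S$ has no isolated vertices, so $|N_S(v)|>0$, again case (a), and $m_S(v)=m$. Therefore $f(S)=\sum_{v\in V}m_S(v)=m\,|V|=f_{max}$.

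For the converse, suppose $f(S)=f_{max}=m\,|V|$. Since $m_S(v)\le m$ for every $v$ by the bound established above, equality of the sum forces $m_S(v)=m$ for all $v\in V$. Now take any $v\in S$: were $|N_S(v)|=0$, we would be in case (b) and get $m_S(v)=m-1<m$, a contradiction; hence $|N_S(v)|\ge 1$ and $v$ is not isolated in $S$. Take any $v\in V\setminus S$: were $|N_S(v)|<m$, we would be in case (c) and get $m_S(v)=|N_S(v)|<m$, a contradiction; hence $|N_S(v)|\ge m$. Thus every node outside $S$ has at least $m$ neighbors in $S$ and $S$ is isolate-free, i.e.\ $S$ is a fault-tolerant total dominating set.

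I do not expect any real obstacle here: the delicate part of this development was designing $m_A$ so that its maximal value $m$ is attained exactly on the nodes meeting the problem's requirement, and that work is already done. The one point to handle carefully is the standing assumption that $G$ has no isolated vertices — without it no fault-tolerant total dominating set exists, while $f_{max}$ would drop to $m\,|V|$ minus the number of isolated vertices (each such vertex contributing at most $m-1$, attained only when it is placed in $A$, which would then make $A$ contain an isolated node), so the stated equivalence would fail; this matches the implicit convention already used in Lemma~\ref{lem:total-fmax}.
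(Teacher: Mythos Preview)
Your proof is correct and follows essentially the same approach as the paper: compute $f_{max}=f(V)=m|V|$, verify directly that a fault-tolerant total dominating set attains it, and for the converse use that $m_S(v)\le m$ termwise so equality forces $m_S(v)=m$ for all $v$. Your version is in fact slightly more careful than the paper's (you explicitly justify the termwise bound and handle the converse directly rather than by contradiction), and your remark on the no-isolated-vertices assumption is spot on.
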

    
    \begin{proof}
        We get $f_{\max}=f(V)=m|V|$, since, for all $v \in V$, it holds $|N_V(v)| > 0$ and so $m_V(v)=m$.
        
        Let $S\subseteq V$ be a fault-tolerant total dominating set.
        Then, $m_V(v)=m$ for all $v\in V$. So,
        
        \[f(S)=\sum_{v\in V} m_S(v)=m|V|= f_{\max}.\]
        
        Assume $S\subseteq V$ is not a fault-tolerant total dominating set and $f(S)=f_{\max}=m|V|$. Then, there exists $v'\in V\setminus S$ such that $|N_S(v')|<m$ or there exists $s'\in S$ such that $|N_S(s')|=0$. So, there exists $w\in V$ such that $m_S(w)<m$ since $|N_S(v')|<m \Rightarrow m_S(v')<m $ and $|N_S(s')|=0 \Rightarrow m_S(s')=m-1<m$. Thus, $f(S)=\sum_{v\in V\setminus\{w\}}m_S(v) +  m_S(w) < m(|V|-1)+m=f_{\max}$, a contradiction.
    \end{proof} 
    
    \begin{theorem}\label{thm:fault-tolerant-total}
        Algorithm \ref{alg:GC}, where $U = V$, returns a $\big(1+\ln\big(\Delta+m-1\big)\big)$-approximation for Minimum Fault-Tolerant Total Dominating Set.
    \end{theorem}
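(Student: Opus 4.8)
The plan is to mirror the proof of Theorem~\ref{thm:total}, now feeding the greedy analysis with the fault-tolerant potential function $f$ of Definition~\ref{def:m_A}. First I would argue that Algorithm~\ref{alg:GC}, run with $U=V$ and this $f$, terminates with a set $S$ such that $S\in D_f$ and $f(S)=f_{max}$: the set is built one element at a time, each time choosing a maximizer of the current marginal, so the order of insertion witnesses $S\in D_f$ by Definition~\ref{def:gmds}; and since $f$ is monotone (Lemma~\ref{lem:F-monoton-submodular}) with $f(V)=f_{max}$, whenever $f(S)<f_{max}$ a telescoping expansion of $f(S\cup V)-f(S)$ forces some $u\in V\setminus S$ with $\Delta_u f(S)>0$, so the loop cannot halt prematurely. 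By Lemma~\ref{lem:ftotal-fmax}, $f(S)=f_{max}$ is exactly the assertion that $S$ is a fault-tolerant total dominating set.

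Next, since $f$ is submodular and non-decreasing, I would invoke Corollary~\ref{cor:b-to-submodular} (equivalently Theorem~\ref{thm:beta-greedy} with $\varepsilon=0$, together with Corollary~\ref{cor:simplify-big-o} to absorb the $O(1)$ term) to obtain $|S|\le\big(1+\ln(\delta_{max}/\delta_{min})\big)\cdot|C|$ for every fault-tolerant total dominating set $C$, in particular for a minimum one $S^*$; the hypothesis $\delta_{min}>0$ needed by the lemma will follow from the $\delta_{min}\ge 1$ bound established below.

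The remaining work is to pin down $\delta_{max}$ and $\delta_{min}$ for this $f$. For $\delta_{max}=\Delta_{s_1}f(\emptyset)=\max_{x\in V}\Delta_x f(\emptyset)$, I would apply Lemma~\ref{lem:ftotal_fax} with $A=\emptyset$: here $K(\emptyset)=\emptyset$ (no node satisfies $m_\emptyset(v)=m$ when $m\ge 1$), so $|N_{V\setminus K(\emptyset)}(x)|=|N(x)|$, and since $x\notin K(\emptyset)$ with $|N_\emptyset(x)|=0$ we have $t_\emptyset(x)=m-1$; hence $\Delta_x f(\emptyset)=|N(x)|+m-1$, and the greedy choice of $s_1$ as a node of maximum degree gives $\delta_{max}=\Delta+m-1$. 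For $\delta_{min}$, I would observe via Lemma~\ref{lem:ftotal_fax} that every marginal $\Delta_x f(A)=|N_{V\setminus K(A)}(x)|+t_A(x)$ is a sum of non-negative integers — the three cases of $t_A$ are $0$, $m-|N_A(x)|$ (non-negative since $x\notin K(A)$ forces $|N_A(x)|<m$), and $m-1$ — so each marginal is a non-negative integer, and thus any marginal the algorithm actually acts on, being positive, is at least $1$; hence $\delta_{min}\ge 1$. Substituting $\delta_{max}=\Delta+m-1$ and $\delta_{min}\ge 1$ yields $|S|\le\big(1+\ln(\Delta+m-1)\big)\cdot|S^*|$.

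I expect no genuine obstacle here: the heavy lifting — submodularity, monotonicity, and the $f_{max}$ characterization — is already carried out in Lemmata~\ref{lem:F-monoton-submodular} and \ref{lem:ftotal-fmax}. The single point that needs care is the computation of $\delta_{max}$: the extra summand $m-1$ arising from the self-term $t_\emptyset(s_1)$ is precisely what upgrades the base $\Delta$ of the total-domination bound to $\Delta+m-1$, so it must not be discarded; and one should sanity-check the degenerate case $m=1$, where $t_A(x)$ may vanish but a positive marginal then comes entirely from the integer term $|N_{V\setminus K(A)}(x)|\ge 1$, keeping $\delta_{min}\ge 1$ intact.
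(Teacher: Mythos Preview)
Your proposal is correct and follows essentially the same approach as the paper: invoke Lemmata~\ref{lem:F-monoton-submodular} and \ref{lem:ftotal-fmax} together with the submodular case of the greedy bound (Corollary~\ref{cor:b-to-submodular}), then compute $\delta_{max}=\Delta+m-1$ via Lemma~\ref{lem:ftotal_fax} at $A=\emptyset$ and observe $\delta_{min}\ge 1$ by integrality of the marginals. Your write-up is in fact more explicit than the paper's on why the loop cannot halt before reaching $f_{max}$ and on the integrality argument for $\delta_{min}$, but the underlying ideas are identical.
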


    \begin{proof}
        By definition of $f$, Lemmata
        \ref{lem:F-monoton-submodular} and \ref{lem:ftotal-fmax}, and Corollary~\ref{cor:b-to-submodular}, it holds $S$ is a fault-tolerant total dominating set for $G$ and $\delta_{\max} = \Delta_{s_{1}}f(\{\emptyset\}) = |N_V(s_1)|+t_\emptyset(s_1) = \Delta + m - 1$, $|N_V(s_1)|=\Delta$ by greedy choice of $s_1$ and $t_\emptyset(s_1) = m - 1$ by case b of Definition~\ref{def:m_A}.
        It also holds: $$\delta_{\min} = \Delta_{s_{|S|}}f(S_{|S|-1}) \geq 1 \ and \
        |S|\leq
        \Big(1 + \ln\big(\frac{\delta_{\max}}{\delta_{\min}}\big)\Big)\cdot |C|
        $$
        for every fault-tolerant total dominating set $C$. 
        Assuming $S^*$ is a fault-tolerant total dominating set of minimum size, it follows
        $
        |S|\leq \Big(1+\ln\big(\Delta+m-1 \big)\Big)\cdot |S^*|
        $
    \end{proof}

\section{Weighted Partial Positive Influence Domination}\label{sec:WPPI}
    In this section, we present a first approximation algorithm for the Weighted Partial Positive Influence Connected Dominating Set problem. To achieve the approximation, we apply the framework in subsection \ref{sec:con-sub-gap}.
    As a warm-up, we first consider the unconnected case of the problem.
    
    \subsection{Weighted Partial Positive Influence Dominating Set}\label{sec:WPPIDS}
    Recall that given a graph $G = (V, E, w)$, a subset of nodes $S \subseteq V$ is a Weighted Partial Positive Influence Dominating Set if every node outside of $S$, that is $v\in V\setminus S$, it holds $W_S(v)\geq W(v)/2$.
    We define the following problem:

    \begin{definition}[Minimum Weighted Partial Positive Influence Dominating Set (WPPIDS)]
    Given a $\ \ $ graph $G=(V, E, w)$, find a WPPIDS $S \subseteq V$ of minimum cardinality.
    \end{definition}
    
    \noindent For this problem,
    we define a submodular and monotone increasing function $h$ such that any subset of $V$ achieving $h_{\max}$ is WPPIDS.

 \begin{definition}\label{def:wm_A}
            Let $G=(V,E,w)$ be a graph. 
            We define $h: 2^V\rightarrow\mathbb{R}$ as follows:
            \[ 
            h(A)=\sum_{v\in V} h_A(v),
            \]
            \noindent where:
            \[ 
            h_A(v)=
            \left\{ \begin{array}{ll}
            W(v)/2, \, & v\in A \ \vee \ W_A(v)\geq W(v)/2\\\\
            W_A(v), \, & otherwise.\\
            \end{array}
            \right.
            \]
    \end{definition}

    Before we continue, we define some important concepts that we use in the theorems. At first, we define function $ms:V\rightarrow\mathbb{R}$ as the minimum positive gain that every node $v\in V$ gives:
   
    \[
    ms(v) = \min\limits_{\substack{A\subseteq V \\ W(v)/2 - W_A(v)>0 }}\Big(W(v)/2 - W_A(v)\Big).
    \]

    \noindent Let $v\in V$, $w_0=W(v)/2$ and $w_1, \dots, w_d$ be the weights for every edge between the node $v$ and a neighbor (where d is the degree of $v$). We rewrite every weight as a reduced fraction ($p_i,q_i\in\mathbb{N}$):

    \[
    w_0 = \frac{p_0}{q_0}, w_1 = \frac{p_1}{q_1}, \dots, w_d = \frac{p_d}{q_d} \ ,
    \]

    \noindent we define function $l:V\rightarrow\mathbb{R}$ as the least common multiple of denominators of reduced fractions of weights of every node:

    \[
    l(v) = LCM\{q_0,q_1, \dots,q_d\}
    \]

    \noindent and we set:
    \[
        L = \max\limits_{v\in V} l(v)
    \]
    We have the next Lemma:
    \begin{lemma}\label{lem:h_inc_sub}
        \[
        \frac{1}{ms(v)}\leq l(v)\leq L, \forall v\in V.
        \]
    \end{lemma}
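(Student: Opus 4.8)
The plan is to prove the left inequality $\tfrac{1}{ms(v)}\le l(v)$ for every $v\in V$ with $W(v)>0$; the right inequality $l(v)\le L$ is immediate from the definition $L=\max_{u\in V}l(u)$. First I would observe that the quantity $W(v)/2-W_A(v)$ appearing in the definition of $ms(v)$ depends on $A$ only through $N_A(v)=N(v)\cap A$, since $W_A(v)=\sum_{i\in N_A(v)}w_{(v,i)}$. Hence, writing the weights on the edges incident to $v$ as $w_1,\dots,w_d$ and $w_0=W(v)/2$, every value attained by $W(v)/2-W_A(v)$ has the form $w_0-\sum_{i\in I}w_i$ for some index set $I\subseteq\{1,\dots,d\}$, and there are only finitely many such values. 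In particular the minimum defining $ms(v)$ is over a finite (nonempty, as $A=\emptyset$ is admissible when $W(v)>0$) set and is attained, so $ms(v)$ is a well-defined positive real.

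Next I would put one such value over the common denominator $l(v)=LCM\{q_0,q_1,\dots,q_d\}$. Since $q_i\mid l(v)$ for every $i$, each $p_i\cdot l(v)/q_i$ is an integer, and therefore
\[
w_0-\sum_{i\in I}w_i \;=\; \frac{\,p_0\cdot\frac{l(v)}{q_0}\;-\;\sum_{i\in I}p_i\cdot\frac{l(v)}{q_i}\,}{l(v)}
\]
has integer numerator. If this value is strictly positive — which is exactly the condition imposed in the minimization defining $ms(v)$ — then its numerator is a positive integer, hence at least $1$, so the value is at least $1/l(v)$. Taking the minimum over all admissible $I$ gives $ms(v)\ge 1/l(v)$, i.e.\ $\tfrac{1}{ms(v)}\le l(v)$, and combining with $l(v)\le L$ yields the claim.

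The only point that needs a little care is ensuring the numerator above is genuinely an integer, which is precisely why $l(v)$ is defined as the least common multiple of \emph{all} the denominators $q_0,\dots,q_d$, including $q_0$, the denominator of the reduced fraction for $W(v)/2$; everything else reduces to the elementary fact that a positive integer is at least $1$. A minor edge case is an isolated node $v$ with $W(v)=0$: then the set over which $ms(v)$ is minimized is empty, and one may either exclude such nodes from the statement or adopt the convention $ms(v)=+\infty$, under which the inequality holds vacuously.
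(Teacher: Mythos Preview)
Your proof is correct and follows essentially the same approach as the paper: rewrite each relevant weight with common denominator $l(v)$, observe that the numerator of $W(v)/2 - W_A(v)$ is then an integer, and conclude that any strictly positive such value is at least $1/l(v)$. Your write-up is in fact more careful than the paper's (you make explicit the dependence on the index set $I\subseteq\{1,\dots,d\}$ rather than writing only the full sum, and you address the edge case of isolated nodes), but the underlying argument is identical.
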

    \begin{proof}
    Let $v\in V$. By using the above analysis, we rewrite every $w_i$ as follows ($p'_i\in \mathbb{N}$):
    \[
    w_0 = \frac{p'_0}{l(v)}, w_1 = \frac{p'_1}{l(v)}, \dots, w_d = \frac{p'_d}{l(v)} \ .
    \]
    Then:
    \[
    ms(v) = w_0 - (w_{1}+\cdots+w_{d}) = 
    \frac{p'_0}{l(v)} - \Big(\frac{p'_{1}}{l(v)} + \cdots + \frac{p'_{d}}{l(v)}\Big) \Rightarrow
    \]
    \[
    ms(v) = 
    \frac{p'_0 - (p'_{1}+\cdots+p'_{d})}{l(v)} > 0 \Rightarrow ms(v) \geq \frac{1}{l(v)}.
    \]
    \end{proof}
    
    \begin{lemma}\label{lem:WPPIDS-monoton-submodular}
        Function $h$ is submodular and monotone increasing.
    \end{lemma}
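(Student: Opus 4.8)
The plan is to mimic the structure of Lemma~\ref{lem:total-monoton-submodular}, since the function $h$ is the weighted analogue of the total-domination potential $f$. The key is to first establish a closed-form expression for the marginal gain $\Delta_x h(A) = h(A\cup\{x\})-h(A)$, analogous to Lemma~\ref{lem:total_fax} and Lemma~\ref{lem:ftotal_fax}. I would introduce the set $K(A)=\{v\in V: h_A(v)=W(v)/2\}$ of already-satisfied nodes (those in $A$, plus those outside $A$ whose incident weight toward $A$ has reached the threshold). For $v\in K(A)$, adding $x$ changes nothing, since $h$-values are capped at $W(v)/2$. For $v\notin K(A)$ with $v\neq x$, the value moves from $W_A(v)$ toward $W_{A\cup\{x\}}(v)=W_A(v)+w_{(v,x)}\cdot[\,x\in N(v)\,]$, but again capped at $W(v)/2$; so the increment is $\min\{w_{(v,x)}\cdot[x\in N(v)],\ W(v)/2-W_A(v)\}$. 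For the new node $x$ itself, its value jumps from $W_A(x)$ (or $W(x)/2$ if already in $K(A)$) up to $W(x)/2$, contributing an extra term $t_A(x):=W(x)/2-h_A(x)\ge 0$. Collecting terms gives $\Delta_x h(A)=t_A(x)+\sum_{v\notin K(A)}\min\{w_{(v,x)}\cdot[x\in N(v)],\,W(v)/2-W_A(v)\}$.

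\emph{Monotonicity.} For $A\subseteq B$ and any $v\in V$, I would argue $h_A(v)\le h_B(v)$ by cases: if $v\in A$ (hence $v\in B$) both equal $W(v)/2$; if $v\notin A$ but $v\in B$ then $h_A(v)=W_A(v)\le W(v)/2=h_B(v)$; if $v\notin B$ then $h_A(v)=\min\{W_A(v),W(v)/2\}\le\min\{W_B(v),W(v)/2\}=h_B(v)$ since $W_A(v)\le W_B(v)$. Summing over $v$ yields $h(A)\le h(B)$.

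\emph{Submodularity.} For $A\subseteq B\subseteq V$ and $x\in V\setminus B$, I need $\Delta_x h(B)\le \Delta_x h(A)$. Comparing the closed forms termwise: first $K(A)\subseteq K(B)$ (a satisfied node stays satisfied), so $V\setminus K(B)\subseteq V\setminus K(A)$ and $B$'s sum ranges over fewer nodes. For each $v$ in the smaller index set $V\setminus K(B)$, the term $\min\{w_{(v,x)}\cdot[x\in N(v)],\,W(v)/2-W_B(v)\}\le\min\{w_{(v,x)}\cdot[x\in N(v)],\,W(v)/2-W_A(v)\}$ because $W_A(v)\le W_B(v)$ shrinks the second argument. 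Finally $t_B(x)\le t_A(x)$ because $h_A(x)\le h_B(x)$ by monotonicity, so $W(x)/2-h_B(x)\le W(x)/2-h_A(x)$. Adding these up gives $\Delta_x h(B)\le\Delta_x h(A)$.

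The main obstacle is purely bookkeeping: getting the marginal-gain formula exactly right, in particular correctly handling the "capping at $W(v)/2$" via the $\min$ in the per-node increment (the unit-weight case of Lemma~\ref{lem:total_fax} hides this because there the threshold is reached in one step). One must also be careful that $x$ may or may not already lie in $K(A)$ and that $x$ may not be adjacent to a given $v$, which is why the indicator $[x\in N(v)]$ appears. Once the formula is pinned down, monotonicity and submodularity both reduce to the elementary facts that $W_A(v)\le W_B(v)$ and that $t\mapsto\min\{c,t\}$ is non-decreasing, exactly as in the proof of Lemma~\ref{lem:F-monoton-submodular}.
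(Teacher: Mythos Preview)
Your proposal is correct and takes essentially the same approach as the paper: both arguments establish submodularity by showing the per-node marginal $h_{A\cup\{x\}}(v)-h_A(v)$ is non-increasing in $A$ and then sum over $v$. Your version packages this via $K(A)$, the capped-$\min$ increment, and $t_A(x)$ (following the pattern of Lemma~\ref{lem:ftotal_fax}), whereas the paper does the same per-node case analysis directly; one minor slip---in the case $v\notin A$, $v\in B$ you write $h_A(v)=W_A(v)$, but $h_A(v)$ could already equal $W(v)/2$ if $W_A(v)\ge W(v)/2$---does not affect the conclusion since $h_A(v)\le W(v)/2$ regardless.
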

    \begin{proof}
        By definition of $W_A(v)$, $h$ is monotone increasing. We show $h$ is submodular. If we show that $\Delta_{x}h_B(v)\leq \Delta_{x}h_A(v)$ $\big($where $\Delta_{x}h_B(v)=h_{B\cup \{x\}}(v)-h_B(v)\big)$ for every $A\subseteq B\subseteq V$ and $x\in V$, then, it holds:
        \begin{align*}     
           \Delta_{x}h_B(v) &\leq 
            \Delta_{x}h_A(v), \ \forall v\in V \then \\
            h(B)+\sum_{v\in V}\Delta_{x}h_B(v)-h(B) &\leq 
            h(A)+\sum_{v\in V}\Delta_{x}h_A(v)-h(A) \then \\
            h(B\cup \{x\})-h(B) &\leq h(A\cup \{x\})-h(A)
        \end{align*}
    because for every $A\subseteq V$ and $x\in V$:
    \[
    h(A\cup \{x\}) = \sum_{v\in V} h_{A\cup \{x\}}(v) = 
    \sum_{v\in V} h_{A}(v)+\sum_{v\in V} h_{A\cup \{x\}}(v) - \sum_{v\in V} h_{A}(v) =
    \]
    \[
    h(A)+\sum_{v\in V} \big(h_{A\cup \{x\}}-h_{A}(v)\big) =
     h(A)+\sum_{v\in V}\Delta_{x}h_A(v).
    \]
     By definition of $W_A(v)$ and $h_A(v)$, $\forall v\notin A\cup\{x\}$, it holds:
      \[ 
            \Delta_{x}h_A(v)=
            \left\{ \begin{array}{lll}
              W_x(v), \, & W_A(v)+W_x(v)\leq W(v)/2\\\\
            W(v)/2-W_A(v), \, & W_A(v)\leq W(v)/2 \leq W_A(v)+W_x(v)\\\\
            0, \, & otherwise.\\
            \end{array}
            \right.
            \]
    and $\forall v\in A\cup\{x\}$ it holds $\Delta_{x}h_A(v)=0$. Let $A\subseteq B\subseteq U$ and $x\in U$. We distinguish the cases below:
        \begin{itemize}
            \item Let $v\in A\cup\{x\}$. Then, $v\in B\cup\{x\}$ and so $\Delta_{x}h_A(v)=\Delta_{x}h_B(v)=0$.
            \item Let $v\in B\cup\{x\}$ and $v\notin A\cup\{x\}$. Then $\Delta_{x}h_B(v)=0\leq \Delta_{x}h_A(v)$.
            \item Let $v\notin B\cup\{x\}$ and $v\notin A\cup\{x\}$. Then, because $W_S(v)$ is increasing, then $\Delta_{x}h_B(v)\leq\Delta_{x}h_A(v)$.
        \end{itemize}
    \end{proof}

    \begin{lemma}\label{lem:WPPIDS-fmax}
         Let $G=(V,E,w)$ be a graph. A set $S\subseteq V$ is a WPPIDS if and only if $h(S)=h_{\max}$.
    \end{lemma}
    \begin{proof}
        We get $f_{\max}=f(V)=\sum_{v\in V} W(v)/2$, since, for all $v \in V$, it holds $W_V(v) = W(v)\geq W(v)/2$ and so $h_V(v)=W(v)/2$.
        
        Let $S\subseteq V$ be a WPPIDS.
        Then, $h_S(v)=W(v)/2$ for all $v\in V$. We have $f(S)=\sum_{v\in V} W(v)/2 = f(V) = f_{\max}$.
        
        Assume $S\subseteq V$ is not a WPPIDS and $h(S)=h_{\max}$. Then, there exists $v'\in V\setminus S$ such that $W_A(v)<W(v)/2$. So,
        \begin{align*}
            f(S) = &\sum_{v\in V} h_S(v) = \sum_{v\in V\setminus \{v'\}} h_S(v) + h_S(v') \leq
            \sum_{v\in V\setminus \{v'\}} W(v)/2 + h_S(v') = \\
            &\sum_{v\in V\setminus \{v'\}} W(v)/2 + W_S(v') < \sum_{v\in V\setminus \{v'\}} W(v)/2 + W(v')/2 = \sum_{v\in V} W(v)/2 = f_{\max}.
        \end{align*}
        We have a contradiction.
    \end{proof} 

    \begin{theorem}\label{thm:WPPIDS}
        Algorithm \ref{alg:GC}, where $U = V$, returns a \resizebox{0.8\width}{!}{$\Big(1+\ln\big(\delta_{\max}/\delta_{\min}\big)\Big)$}
        -approximation for WPPIDS, where:
        \[
        \delta_{\max} = \frac{3}{2}\cdot W(s_1) \ \ and  \ \ \delta_{\min} \geq \frac{1}{L}.
        \]
    \end{theorem}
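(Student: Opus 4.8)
The plan is to instantiate the submodular-function framework exactly as in the proofs of Theorems~\ref{thm:total} and~\ref{thm:fault-tolerant-total}. By Lemma~\ref{lem:WPPIDS-monoton-submodular} the potential $h$ is submodular and monotone increasing, and by Lemma~\ref{lem:WPPIDS-fmax} a set attains $h_{max}$ if and only if it is a WPPIDS; hence Corollary~\ref{cor:b-to-submodular} (i.e.\ Theorem~\ref{thm:beta-greedy} specialized to $\varepsilon=0$) applies to the run of Algorithm~\ref{alg:GC} with $U=V$ and guarantees that the returned set $S$ is a WPPIDS with $|S|\le\big(1+\ln(\delta_{max}/\delta_{min})\big)\cdot|S^*|$ for a minimum WPPIDS $S^*$. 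So the whole statement reduces to evaluating $\delta_{max}=\Delta_{s_1}h(\emptyset)$ and lower-bounding $\delta_{min}$.

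For $\delta_{max}$: first note $h(\emptyset)=0$, since $W_\emptyset(v)=0$ forces $h_\emptyset(v)=0$ for every $v$ (for an isolated node $v$ one also has $W(v)/2=0$). Thus $\delta_{max}=\Delta_{s_1}h(\emptyset)=h(\{s_1\})$, which can be computed term by term: $h_{\{s_1\}}(s_1)=W(s_1)/2$ because $s_1\in\{s_1\}$; for each $v\in N(s_1)$ we get $h_{\{s_1\}}(v)=\min\{W(v)/2,\,w_{(v,s_1)}\}\le w_{(v,s_1)}$; and $h_{\{s_1\}}(v)=0$ otherwise. Summing and using $\sum_{v\in N(s_1)}w_{(v,s_1)}=W(s_1)$ yields $\delta_{max}=W(s_1)/2+\sum_{v\in N(s_1)}\min\{W(v)/2,w_{(v,s_1)}\}\le\tfrac32W(s_1)$; as $\ln$ is monotone, replacing $\delta_{max}$ by this upper bound only weakens the approximation bound, so we may take $\delta_{max}=\tfrac32W(s_1)$ as in the statement.

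For $\delta_{min}$: I would reuse the explicit case formula for $\Delta_x h_A(v)$ derived inside the proof of Lemma~\ref{lem:WPPIDS-monoton-submodular}, which gives that each per-node increment $\Delta_x h_A(v)$ equals $W_x(v)$, or $W(v)/2-W_A(v)$, or $0$. Since $\Delta_x h(A)=\sum_{v\in V}\Delta_x h_A(v)$ is a sum of nonnegative terms, whenever $\Delta_x h(A)>0$ it is at least its smallest positive summand, so it suffices to bound every positive summand below by $1/L$. If the summand is $W(v)/2-W_A(v)>0$, then by definition of $ms(v)$ it is $\ge ms(v)$, and Lemma~\ref{lem:h_inc_sub} gives $ms(v)\ge 1/l(v)\ge 1/L$. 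If the summand is $W_x(v)>0$, it is the single edge weight $w_{(v,x)}$ incident to $v$; writing it in lowest terms as $p/q$, the denominator $q$ divides $l(v)=LCM\{q_0,q_1,\dots,q_d\}$, hence $w_{(v,x)}=p/q\ge 1/q\ge 1/l(v)\ge 1/L$. So every positive greedy increment is $\ge 1/L$, giving $\delta_{min}\ge 1/L$; plugging $\delta_{max}=\tfrac32W(s_1)$ and $\delta_{min}\ge 1/L$ into the framework bound completes the argument.

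I expect the real work to be the $\delta_{min}$ bound, and specifically the $W_x(v)$ case: it is tempting to only track the ``finishing'' increments $W(v)/2-W_A(v)$ that Lemma~\ref{lem:h_inc_sub} handles directly, but one must also rule out arbitrarily small plain edge-weight contributions. The clean resolution is the observation that every edge weight at $v$, written as a reduced fraction, has denominator dividing $l(v)$ and is therefore $\ge 1/l(v)\ge 1/L$ --- the same common-denominator argument that proves Lemma~\ref{lem:h_inc_sub}, now applied to the individual weights rather than to $ms(v)$.
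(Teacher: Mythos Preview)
Your proposal is correct and follows the same framework-instantiation approach as the paper's (very terse) proof, which simply invokes Lemmata~\ref{lem:WPPIDS-monoton-submodular}, \ref{lem:WPPIDS-fmax}, Lemma~\ref{lem:h_inc_sub} for $\delta_{min}\ge 1/L$, and Corollary~\ref{cor:b-to-submodular}. Your explicit computation of $\delta_{max}\le\tfrac32 W(s_1)$ and your case split for $\delta_{min}$ (in particular the common-denominator argument for the $W_x(v)$ summand) spell out details the paper leaves implicit.
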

    \begin{proof}
        By Lemma~\ref{lem:h_inc_sub}, it holds $\delta_{\min} \geq 1/L$. So, the results hold by definition of $h$, Lemmata~\ref{lem:WPPIDS-monoton-submodular} and \ref{lem:WPPIDS-fmax}, and Corollary~\ref{cor:b-to-submodular} (same approach with Theorem~\ref{thm:fault-tolerant-total}).
    \end{proof}

    \paragraph{Discussion.}

    This approximation result is a generalization of the result which is presented in \citep{zhu2010new}, where they prove a $1+\ln(\lceil\frac{3\Delta}{2}\rceil)$-approximation for the case that all weights be equal to 1. If we let $\lceil W(u)/2\rceil$ be an upper bound for function $h_A(u)$, then it follows $L=1\then \delta_{\min}=1$, $W(s_1)=\Delta\then\delta_{\max}=\lceil \frac{3\Delta}{2}\rceil$ and we obtain the same result. Also, the approximation results can be extended to degree percentage constraints problem \citep{zhong2023unified} (simple case) to provide a $\Big(1+\ln\big((1 + p)\cdot L\cdot W\big)\Big)$-approximation guarantee (see discussion, subsection~\ref{sec:WPPICDS}).


\subsection{Weighted Partial Positive Influence Total Dominating Set}\label{sec:WPPITDS}
    Recall that given a graph $G = (V, E, w)$, a subset of nodes $S \subseteq V$ is a Weighted Partial Positive Influence Total Dominating Set if every node outside of $S$, that is $v\in V\setminus S$, it holds $W_S(v)\geq W(v)/2$ and $S$ has no isolated nodes.
    We define the following problem:

    \begin{definition}[Minimum$\ $ Weighted$\ $ Partial$\ $ Positive$\ $ Influence$\ $ Total$\ $ Dominating$\ $ Set$\ $(WPPIDS)]
    Given a graph $G=(V, E, w)$, find a WPPITDS $S \subseteq V$ of minimum cardinality.
    \end{definition}

    \noindent For this problem,
    by using the functions we define at subsection~\ref{sec:total-domination} (function $f$) for Total Domination and at subsection~\ref{sec:WPPIDS} (function $h$) for WPPIDS, we can prove an approximation result for WPPITDS problem. 

     \begin{definition}
            Let $G=(V,E,w)$ be a graph. 
            We define $g: 2^V\rightarrow\mathbb{R}$ as follows:
            \[ 
            g(A)=h(A)+\frac{1}{L}f(A),
            \]
            \noindent where $f$ is the function that is defined in subsection~\ref{sec:total-domination} and $h$ is the function that is defined in subsection~\ref{sec:WPPIDS}.
    \end{definition}

    \begin{lemma}\label{lem:WPPITDS-monoton-submodular}
        Function $g$ is submodular and monotone increasing.
    \end{lemma}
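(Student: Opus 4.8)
The plan is to reduce this statement entirely to the two lemmas already established for the component functions, together with the elementary closure properties of submodularity and monotonicity under nonnegative linear combinations. By Lemma~\ref{lem:total-monoton-submodular}, the function $f$ (counting nodes with at least one neighbor in the argument set) is submodular and monotone increasing; by Lemma~\ref{lem:WPPIDS-monoton-submodular}, the function $h$ is also submodular and monotone increasing. Since $L = \max_{v\in V} l(v)$ is a least common multiple of positive integers, we have $L \ge 1 > 0$, hence $\tfrac{1}{L} > 0$.

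First I would verify monotonicity: for any $A\subseteq B\subseteq V$ we have $f(A)\le f(B)$ and $h(A)\le h(B)$, so $g(A) = h(A) + \tfrac1L f(A) \le h(B) + \tfrac1L f(B) = g(B)$, using that $\tfrac1L$ is positive. Next I would verify submodularity directly from the incremental form: for any $A\subseteq B\subseteq V$ and $x\in V\setminus B$,
\[
\Delta_x g(B) = \Delta_x h(B) + \tfrac1L\,\Delta_x f(B) \le \Delta_x h(A) + \tfrac1L\,\Delta_x f(A) = \Delta_x g(A),
\]
where the inequality combines $\Delta_x h(B)\le \Delta_x h(A)$ (submodularity of $h$) with $\Delta_x f(B)\le \Delta_x f(A)$ (submodularity of $f$) scaled by the nonnegative factor $\tfrac1L$. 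This establishes both claims.

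There is essentially no hard step here: the only point worth stating explicitly is that the coefficient $\tfrac1L$ must be nonnegative for both closure properties to go through, which is immediate. The choice of the particular coefficient $\tfrac1L$ (rather than an arbitrary positive constant) plays no role in this lemma; its significance surfaces only later, when computing $\delta_{min}$ for $g$ in the approximation theorem for WPPITDS, since $\tfrac1L$ is the smallest positive marginal increment that the $f$-part can contribute, matching the lower bound $\delta_{min}\ge \tfrac1L$ for the $h$-part coming from Lemma~\ref{lem:h_inc_sub}.
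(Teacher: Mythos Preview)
Your proof is correct and follows exactly the paper's approach: the paper's proof simply invokes Lemmata~\ref{lem:total-monoton-submodular} and \ref{lem:WPPIDS-monoton-submodular} together with the definition of $g$, and you have spelled out the (trivial) closure-under-nonnegative-combinations argument that this invocation relies on. Your additional remark about the role of the particular coefficient $\tfrac1L$ is accurate commentary but, as you note, irrelevant to this lemma itself.
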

    \begin{proof}
        The results holds by definition of g and Lemmata \ref{lem:total-monoton-submodular} and \ref{lem:WPPIDS-monoton-submodular}.
    \end{proof}

    \begin{lemma}\label{lem:WPPITDS-fmax}
         Let $G=(V,E,w)$ be a graph. A set $S\subseteq V$ is a WPPTIDS if and only if $g(S)=g_{\max}$.
    \end{lemma}
    \begin{proof}
        The results holds by definition of g and Lemmata \ref{lem:total-fmax} and \ref{lem:WPPIDS-fmax}.
    \end{proof}

    \begin{theorem}\label{thm:WPPITDS}
        Algorithm \ref{alg:GC}, where $U = V$, returns a \resizebox{0.8\width}{!}{$\Big(1+\ln\big(\delta_{\max}/\delta_{\min}\big)\Big)$}-approximation for WPPITDS $\Big(= 1+\ln\big(\frac{3}{2}\cdot L\cdot W+\Delta\big)\Big)$, where:
        \[
        \delta_{\max} = \frac{3}{2}\cdot W(s_1) + \frac{\Delta}{L} \ \ \ and  \ \ \ \delta_{\min} = \frac{1}{L}.
        \]
    \end{theorem}
    \begin{proof}
        The result holds by definition of $g$, Lemmata~\ref{lem:WPPITDS-monoton-submodular} and \ref{lem:WPPITDS-fmax}, and Corollary~\ref{cor:b-to-submodular} (same approach with Theorem~\ref{thm:fault-tolerant-total}).
    \end{proof}

\subsection{Weighted Partial Positive Influence Connected Dominating Set}\label{sec:WPPICDS}
    Recall that given a graph $G = (V, E, W)$, a subset of nodes $S \subseteq V$ is a Weighted Partial Positive Influence Connected Dominating Set if the subset is WPPIDS and connected.
    We define the following problem:
    
    \begin{definition} \normalfont{\textbf{(Minimum$\ $ Weighted$\ $ Partial$\ $ Positive$\ $ Influence$\ $ Connected$\ $ Dominating$\ $ Set (WPPICDS))}}
    Given a graph $G=(V, E, w)$, find a WPPCIDS $S \subseteq V$ of minimum cardinality.
    \end{definition}

    \noindent For this problem, 
    we define a $\varepsilon$-approximately $\mathcal{C}$-submodular and monotone increasing function $f$ such that any subset of $V$ achieving $f_{\max}$ is WPPICDS.

\begin{definition}
            Let $G=(V,E,W)$ be a graph. 
            We define $f: 2^V\rightarrow\mathbb{R}$, $\forall A\subseteq V$, as follows:
            \[ 
            f(A)=h(A)+c(A), \ \ c(A)=\frac{1}{L}(|V|-q(A)-p(A)),
            \]
            \noindent where $p(A)$ is the number of components of induced subgraph $G_A$ and $q(A)$ the number of components of spanning subgraph induced by the edge set $\{e\in E: e \ \text{has at least one end in } A\}$.
    \end{definition}
    
     \begin{lemma}\label{lem:WPPICDS-subgap}
        Function $f$ is $\varepsilon$-approximately \resizebox{0.9\width}{!}{$\mathcal{C}$}-submodular and non-decreasing, where $\varepsilon=1/L$ and $\mathcal{C}$ the collection of all connected subsets of $V$.
    \end{lemma}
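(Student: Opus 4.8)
The plan is to write $f=h+c$ with $c(A)=\tfrac1L\bigl(|V|-q(A)-p(A)\bigr)$ and to handle the two summands separately. By Lemma~\ref{lem:WPPIDS-monoton-submodular} the term $h$ is already submodular and non-decreasing, so both the monotonicity of $f$ and its entire submodularity gap must come from the purely combinatorial connectivity term $c$. Throughout I would assume $G$ connected (otherwise no WPPICDS exists and the claim is vacuous), so $G$ has no isolated vertex; and since $\Delta_x(\cdot)$ vanishes when $x\in A$, in every computation it suffices to treat $x\notin A\cup B$. For such $x$ the whole argument rests on two elementary identities: if $c_A(x)$ is the number of connected components of the induced subgraph $G_A$ containing a neighbour of $x$, and $k_A(x)$ is the number of connected components of the edge-subgraph $(V,E_A)$, $E_A=\{e\in E:e\cap A\neq\emptyset\}$, that meet $\{x\}\cup N(x)$, then $\Delta_x p(A)=1-c_A(x)$ and $\Delta_x q(A)=1-k_A(x)$ --- inserting $x$ into $A$ adds one fresh vertex to $G_A$ and fuses it with exactly the $c_A(x)$ old components it meets, and adds to $(V,E_A)$ exactly the edges at $x$, fusing the $k_A(x)$ old components they touch.

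\emph{Monotonicity and the $q$-part of the gap.} Since $h$ is non-decreasing, monotonicity of $f$ reduces to $\Delta_x c(A)\ge0$, i.e. $1-c_A(x)\le k_A(x)-1$. If $x$ has a neighbour in $A$, then $c_A(x)\ge1$ makes the left side $\le0$ and $k_A(x)\ge1$ the right side $\ge0$. If $x$ has no neighbour in $A$, then $c_A(x)=0$ and $x$ is isolated in $(V,E_A)$; since $G$ has no isolated vertex, $x$ has a neighbour $y\notin A$ in a component of $(V,E_A)$ distinct from $\{x\}$, so $k_A(x)\ge2$. For the gap, fix $A\subseteq V$, a connected $B$, and $x\notin A\cup B$; submodularity of $h$ applied to $A\subseteq A\cup B$ gives $\Delta_x h(A\cup B)\le\Delta_x h(A)$, so it remains to prove $\bigl(\Delta_x q(A)-\Delta_x q(A\cup B)\bigr)+\bigl(\Delta_x p(A)-\Delta_x p(A\cup B)\bigr)\le1$. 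The $q$-contribution is $\le0$: from $E_A\subseteq E_{A\cup B}$ the component partition of $(V,E_{A\cup B})$ coarsens that of $(V,E_A)$, so $\{x\}\cup N(x)$ meets no more components of the former than of the latter, i.e. $k_{A\cup B}(x)\le k_A(x)$, which by the identity is $\Delta_x q(A)-\Delta_x q(A\cup B)=k_{A\cup B}(x)-k_A(x)\le0$. (Equivalently, $q(A)=|V|-\mathrm{rank}(E_A)$ for the monotone submodular graphic-matroid rank, so $q$ is supermodular in $A$.)

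\emph{The $p$-part --- the crux, and the only place connectedness of $B$ is used.} The structural claim I would prove is: every connected component $D$ of $G_{A\cup B}$ with $D\cap B=\emptyset$ is a connected component of $G_A$. Indeed such $D\subseteq A$, so every $G_{A\cup B}$-edge inside $D$ is a $G_A$-edge and $G_A[D]$ is connected, and no $G_A$-edge can leave $D$ (it would be a $G_{A\cup B}$-edge leaving the $G_{A\cup B}$-component $D$), so $D$ is a maximal connected vertex set of $G_A$. Since $B$ is connected it lies inside one component $D^\star$ of $G_{A\cup B}$, so every other component of $G_{A\cup B}$ is a (distinct) component of $G_A$; hence the number $c_{A\cup B}(x)$ of components of $G_{A\cup B}$ adjacent to $x$ is at most $1$ (for $D^\star$) plus the number of components of $G_A$ adjacent to $x$, i.e. $c_{A\cup B}(x)\le c_A(x)+1$, which by the identity is $\Delta_x p(A)-\Delta_x p(A\cup B)=c_{A\cup B}(x)-c_A(x)\le1$. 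Adding this to the $q$-bound yields $\Delta_x f(A\cup B)\le\Delta_x f(A)+\tfrac1L$, so $f$ is $(1/L)$-approximately $\mathcal{C}$-submodular for $\mathcal{C}$ the family of connected subsets of $V$. The one genuinely non-routine step is this structural claim, and it is exactly where connectedness of $B$ is needed: if $B$ were disconnected it could meet several components of $G_{A\cup B}$ and $c_{A\cup B}(x)-c_A(x)$ could exceed $1$, which is why the property is only conditional.
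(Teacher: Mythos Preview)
Your proof is correct and follows essentially the same decomposition as the paper: split $f=h+c$, use the already-established submodularity and monotonicity of $h$, and show that the entire $\varepsilon$-gap comes from the $p$-part of $c$ while the $q$-part is genuinely submodular. The only real difference is that where the paper outsources the three component-counting facts to citations (\cite{zhu2010new} for monotonicity of $c$, \cite{Ruan2004} for the submodularity of $-q$, and \cite{Zhou2014} for the ``$+1$'' bound on $-p$), you prove them from scratch via the identities $\Delta_x p(A)=1-c_A(x)$, $\Delta_x q(A)=1-k_A(x)$ and the structural claim that the non-$D^\star$ components of $G_{A\cup B}$ are already components of $G_A$. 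This self-contained route is a bit longer but has the advantage of making explicit exactly where connectedness of $B$ enters (namely, to force $B\subseteq D^\star$), whereas the paper's cited bound is phrased for ``$B\setminus A$ connected'' and one must still observe that the argument goes through under the weaker hypothesis ``$B$ connected'' required by Definition~\ref{def:con-b-gap}.
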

    \begin{proof}
        First, we show that $f$ is non-decreasing. By Lemma~\ref{lem:WPPIDS-monoton-submodular}, $h(A)$ is monotone increasing. By \cite{zhu2010new}, function $c(A)$ is non-decreasing. So, the result holds.
        
        Second, we show that $f$ is $\varepsilon$-approximately $\mathcal{C}$-submodular, where $\varepsilon=1/L$.
        By Lemma~\ref{lem:WPPIDS-monoton-submodular}, it holds:
        \[
        \Delta_{x}h(B)\leq \Delta_{x}h(A), \forall x\in V \ and \ \forall A\subseteq B.
        \]
        By \cite{Ruan2004} (Lemma 3.2), it holds:
        \[
        -\Delta_{x}q(B)\leq -\Delta_{x}q(A),\forall x\in V \ and \ \forall A\subseteq B.
        \]
        By \cite{Zhou2014} (Lemma 4.1, claim 3), it holds:
        \[
        -\Delta_{x}p(B)\leq -\Delta_{x}p(A)+1, \forall x\in V \ and \ \ (\forall A\subseteq B) \ B\setminus A \ connected.
        \]
        By above inequalities, it holds:
        \[
        \Delta_{x}f(B)\leq \Delta_{x}f(A)+\frac{1}{L}.
        \]
    \end{proof}

    \begin{lemma}\label{lem:WPPICDS-iff}
         Let $G=(V,E,w)$ be a graph. A set $S\subseteq V$ is a WPPICDS if and only if $f(S)=h_{\max}+\frac{|V|-2}{L}$.
    \end{lemma}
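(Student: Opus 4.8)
The plan is to split $f = h + c$, bound each piece by its maximum, and observe that the two maxima are attained simultaneously exactly on the WPPICDS's. From the proof of Lemma~\ref{lem:WPPIDS-fmax} we have $h(A)\le h_{max}=h(V)$ for every $A\subseteq V$, with equality iff $A$ is a WPPIDS. For the additive term, I would note that for every nonempty $A$ the induced subgraph $G_A$ has $p(A)\ge 1$ component and the spanning subgraph on the edges with an end in $A$ has $q(A)\ge 1$ component, so $c(A)=\tfrac1L\big(|V|-q(A)-p(A)\big)\le \tfrac{|V|-2}{L}$, with equality iff $p(A)=q(A)=1$; moreover $c(\emptyset)=\tfrac1L(|V|-|V|-0)=0$ while $h(\emptyset)=0\ne h_{max}$, so $\emptyset$ is never a candidate. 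Hence $f(A)\le h_{max}+\tfrac{|V|-2}{L}$ for all $A$ (so in particular $f_{max}=h_{max}+\tfrac{|V|-2}{L}$ once a WPPICDS exists, e.g.\ $V$ itself), and equality in $f(S)=h_{max}+\tfrac{|V|-2}{L}$ forces $h(S)=h_{max}$, $p(S)=1$ and $q(S)=1$ all at once.

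For the ``if'' direction, suppose $f(S)=h_{max}+\tfrac{|V|-2}{L}$. Then $h(S)=h_{max}$, so $S$ is a WPPIDS by Lemma~\ref{lem:WPPIDS-fmax}, and $p(S)=1$, so $G_S$ is connected; thus $S$ is a WPPICDS. For the ``only if'' direction, let $S$ be a WPPICDS. Being a WPPIDS gives $h(S)=h_{max}$, and being connected gives $p(S)=1$, so it only remains to check $q(S)=1$. Since $G$ is connected (as is standard for connected-domination problems), it has no isolated vertex, hence for $v\notin S$ the WPPIDS inequality $W_S(v)\ge W(v)/2>0$ forces $v$ to have a neighbour in $S$; that is, $S$ is a dominating set. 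In the spanning subgraph $H$ on edge set $\{e\in E: e$ has an end in $S\}$, every edge of $G_S$ appears, so the connected set $S$ lies in one component of $H$, and each $v\notin S$ is attached to that component by one of its edges to $S$ (such an edge has an end in $S$, hence lies in $H$); therefore $H$ is connected, $q(S)=1$, and $f(S)=h_{max}+\tfrac1L(|V|-1-1)=h_{max}+\tfrac{|V|-2}{L}$.

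I expect the only genuinely delicate step to be the argument that $q(S)=1$: one must use that an edge between two vertices both lying outside $S$ is \emph{absent} from $H$, so $q(S)=1$ really does require $S$ to be a connected \emph{dominating} set (connectivity of $G_S$ plus domination, not mere connectivity of $G$), and one must invoke the ambient assumption that $G$ is connected to exclude isolated vertices, on which the WPPIDS condition is vacuous yet which would leave $H$ disconnected and break the forward implication. The remaining ingredients — the trivial bounds $p,q\ge 1$, the identification of the equality cases, and the reduction of the $h$-part to Lemma~\ref{lem:WPPIDS-fmax} — are routine.
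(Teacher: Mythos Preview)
Your proof is correct and follows essentially the same decomposition as the paper: bound $h$ and $c$ separately, with $h(S)=h_{max}$ iff $S$ is a WPPIDS and $c(S)=(|V|-2)/L$ iff $p(S)=q(S)=1$. The only difference is that the paper delegates the characterization $|V|-q(S)-p(S)=|V|-2 \iff S$ is a CDS to the cited result of Ruan et al., whereas you argue $q(S)=1$ directly from domination plus connectivity of $G_S$; your handling of the isolated-vertex/empty-set edge cases is more explicit than the paper's.
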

    \begin{proof}
    By Lemma~\ref{lem:WPPIDS-monoton-submodular}, $h(S)=h_{\max}$ if and only if $S$ is WPPIDS. By \cite{zhu2010new}, the function $|V|-q(S)-p(S)$ is non-decreasing and by \cite{Ruan2004} $|V|-q(S)-p(S)=|V|-2$ if and only if $S$ is CDS. Also, by definition, $q(A)\geq 1$ and $p(A)\geq 1, \forall A\in V$ and so $|V|-q(A)-p(A)\leq|V|-2$. We have the two cases:
    \begin{itemize}
    \setlength\itemsep{0.5em}
        \item if $S$ is not WPPIDS, then $f(S)<h_{\max}+(|V|-2)/L$.
        \item if $S$ is WPPIDS, then $h(S)=h_{\max}$ and $S$ is DS. So:
        \vspace{0.5em}
        \begin{itemize}
        \setlength\itemsep{0.5em}
            \item if $S$ is not connected then $|V|-q(S)-p(S)<|V|-2$.
            \item if $S$ is connected then $|V|-q(S)-p(S)=|V|-2$.
        \end{itemize}
    \end{itemize}
    So, the result holds.
    \end{proof}

     \begin{theorem}
      Algorithm \ref{alg:GC}, where $U = V$, returns a \resizebox{0.9\width}{!}{$\Big(2+\ln\big(\delta_{\max}/\delta_{\min}\big)\Big)$}-approximation for WPPICDS $\Big(= 2+\ln\big(\frac{3}{2}\cdot L\cdot W + \Delta \big)\Big)$, where:
        \[
        \delta_{\max} = \frac{3}{2}\cdot W(s_1) + \frac{\Delta}{L}
        \ \ and \ \
        \delta_{\min} = \frac{1}{L}.
        \]
    \end{theorem}
    \begin{proof}
        Let $S\in V$ be a solution returned by Algorithm~\ref{alg:GC}. If we prove that $f(S)=f_{\max}$, then the result holds by definition of $f$, Lemmata
        \ref{lem:WPPICDS-subgap}, \ref{lem:WPPICDS-iff} and \ref{lem:con-gap-greedy-approximation}, and Corollary~\ref{cor:simplify-big-o} (same approach with Theorem~\ref{thm:fault-tolerant-total}).
        
        By Algorithm~\ref{alg:GC}, because $h(A)$ is monotone increasing and $c(A)$ is not decreasing, $h(S)=h_{\max}$. So, if we prove that $S$ is connected then, by Lemma~\ref{lem:WPPICDS-iff}, $f(S)=f_{\max}$. 

        Let $S$ is not connected. Because $h(S)=h_{\max}$, then $S$ is dominating set. By using the analysis in \citep{Zhou2014} (Lemma 4.2, Claim 3), there exists $v\in V\setminus S$ such that $c(S\cup \{v\})-c(S)>0$ and so $\Delta_{v}f(S)>0$. We have a contradiction because, by Algorithm~\ref{alg:GC}, $\Delta_{x}f(S)=0, \ \forall x\in V$.
    \end{proof}

    \paragraph{Discussion.}
    In the construction of the potential function for WPPICDS, we introduce a scaling factor of $1/L$ to the cost function $c(A)$. This normalization is essential due to the non-submodular and integer-valued nature of $c(A)$. Without this adjustment, the function exhibits a submodularity gap of $\varepsilon = 1$, which propagates into the approximation guarantee as a dependency on $1/\delta_{\min}$. Since $\delta_{\min}$ cannot be computed exactly for the $h$ function (see subsection~\ref{sec:WPPIDS}) and only a lower bound is available (specifically, $\delta_{\min} \geq 1/L$), the bound would instead involve a factor proportional to $L$. As $L$ can be arbitrarily large depending on the instance, the resulting approximation ratio would be practically uninformative without this normalization.
    This approximation result is a generalization of the result presented in \citep{zhu2010new}, where they prove a $2+\ln(\lceil\frac{5\Delta}{2}\rceil)$-approximation when all weights are equal to $1$. If we let $\lceil W(u)/2\rceil$ be an upper bound for function $h_A(u)$, then it follows $\delta_{\min}=1$, $\delta_{\max}=\lceil 3W(s_1)/2\rceil$, $W(s_1)=\Delta$ and we obtain the same result.  Also, the approximation results can be extended to degree percentage constraints problem \citep{zhong2023unified} (connected case) by substituting the term $1/2$ to a percentage $p$ in function $h_A$ as follows:
     \[ 
            h_A(v)=
            \left\{ \begin{array}{ll}
            p\cdot W(v), \, & v\in A \ \vee \ W_A(v)\geq p\cdot W(v)\\\\
            W_A(v), \, & otherwise.\\
            \end{array}
            \right.
        \]
     to provide a $\Big(2+\ln\big((1 + p)\cdot L\cdot W + \Delta \big)\Big)$ approximation guarantee, with the only difference being that the factor $\frac{3}{2} \cdot W(s_1)$ changes to $(1 + p) \cdot W(s_1)$ and for the calculation of $L$ we set  $w_0=p\cdot W(v)$, $\forall u\in V$.

\section{Conclusions}\label{sec:conclusions}
We extended the general approximation framework for non-submodular functions from integer-valued to fractional-valued functions.
We prove a first logarithmic approximation for Fault-Tolerant Total Domination.
Also, we prove a first logarithmic approximation for Partial Positive Influence Domination problems with fraction-weighted edges in the simple, total, and connected case. Furthermore, all of the above approximation results hold even under degree percentage constraints. In the future, we plan to apply the framework to problems involving required and forbidden properties, e.g., in biology inspired applications like disease pathway detection \citep{nacher2016minimum} or others \citep{grady2022domination}.

\bibliographystyle{abbrvnat}
\bibliography{references}
\label{sec:biblio}

\end{document}